\newcolumntype{Y}{>{\centering\arraybackslash}X}
\newcommand{\eat}[1]{{}}
\newcommand{\argmin}{\arg\min}
\def\p{\widetilde}
\def\grad{\nabla}
\def\ll{\lambda}
\def\sumT{\sum_{t=1}^{T}}
\DeclareMathOperator*{\argmax}{argmax}
\newtheorem{theorem}{Theorem}
\newtheorem{lemma}{Lemma}
\title{Online Caching with no Regret: Optimistic Learning via Recommendations}
\begin{document}

\author{Naram~Mhaisen,
        George~Iosifidis,
        and~Douglas~Leith
\IEEEcompsocitemizethanks{\IEEEcompsocthanksitem N. Mhaisen and G. Iosifidis are with the Department of Software Technology, Delft University of Technology, Netherlands.\protect\\
E-mails: \{n.mhaisen, g.iosifidis\}@tudelft.nl
\IEEEcompsocthanksitem D. Leith is with the School of Computer Science and Statistics, Trinity College Dublin, Ireland. E-mail: doug.leith@tcd.ie}
\thanks{A preliminary version of this work appears in the proceedings of IFIP Networking 2022 \cite{ocol}. The current version expands the content by including a new theorem (Th. \ref{th:regret1}), which improves the regret bound; designing an optimistic meta-learning policy that accommodates multiple predictors; and expanding the numerical analysis using additional datasets and experiments.}}

\IEEEtitleabstractindextext{%

\begin{abstract}
The design of effective online caching policies is an increasingly important problem for content distribution networks, online social networks and edge computing services, among other areas. This paper proposes a new algorithmic toolbox for tackling this problem through the lens of \emph{optimistic} online learning. We build upon the Follow-the-Regularized-Leader (FTRL) framework, which is developed further here to include predictions for the file requests, and we design online caching algorithms for bipartite networks with pre-reserved or dynamic storage subject to time-average budget constraints. The predictions are provided by a content recommendation system that influences the users viewing activity and hence can naturally reduce the caching network's uncertainty about future requests. We also extend the framework to learn and utilize the best request predictor in cases where many are available. We prove that the proposed {optimistic} learning caching policies can achieve \emph{sub-zero} performance loss (regret) for perfect predictions, and maintain the sub-linear regret bound $O(\sqrt T)$, which is the best achievable bound for policies that do not use predictions, even for arbitrary-bad predictions. The performance of the proposed algorithms is evaluated with detailed trace-driven numerical tests.
\end{abstract}

\begin{IEEEkeywords}
Edge Caching, Network Optimization, Online Learning, Regret Analysis.
\end{IEEEkeywords}}

\maketitle

\section{Introduction}

\subsection{Motivation and Background}

The quest for efficient data caching policies spans more than 50 years and remains today  one of the most important research areas for wireless and wired communication systems \cite{paschos-jsac}. Caching was first studied in computer systems where the aim was to decide which files to store in fast-accessible memory segments (\emph{paging}) \cite{Belady66}. Its scope was later expanded due to the explosion of Internet web traffic \cite{aggarwal-www} and the advent of content distribution networks (CDNs) \cite{RossComCom02}, and was recently revisited as a technique to improve the operation of wireless networks through edge caches \cite{femtocaching} and on-device caching \cite{femtocaching_d2d}. A common challenge in these systems is to design an online policy that decides which files to store at a cache, without knowing the future file requests, so as to maximize the cache \emph{hits} or some other  cache-related performance metric.

There is a range of online caching policies that tackle this problem under different assumptions on the request arrivals. Policies such as the LFU and LRU are widely-deployed, yet their performance deteriorates when the file popularity is non-stationary, i.e., the requests are drawn from a time-varying probability distribution \cite{Sleator85, lru-sigmetrics08, lfu-sigmetrics99}. This motivated modeling non-stationary request patterns \cite{snm, kauffman} and optimizing accordingly the caching decisions \cite{mathieu, Elayoubi2015}. Another line of work relies on techniques such as reinforcement learning to estimate the request probabilities and make caching decisions accordingly \cite{gunduz-reinforcement, giannakis-q-learning}; but typically these solutions either do not offer optimality bounds, or do not scale due to having the library size in their bounds.

Caching was studied within the framework of online learning in \cite{geulen2010regret} for a single-cache system; and in its more general form recently in \cite{paschos-infocom19} that proposed an online gradient descent (OGD) caching policy. Interesting follow-up works include sub-modular policies \cite{Li-online-2021}, online mirror-descent policies \cite{stratis-2020}, and the characterization of their performance limits \cite{abhishek-sigm20, abhishek_nurips}. The advantage of these online learning-based caching policies is that they are scalable, do not require training data, and their performance bounds are \emph{robust} to any possible request pattern, even when the requests are generated by an adversary that aims to degrade the caching operation.

Nevertheless, an aspect that remains hitherto unexplored is whether predictions about future requests can improve the performance of such learning-based caching policies \emph{without sacrificing their robustness}. This is important in modern caching systems where often the users receive content viewing recommendations from a recommendation system (\emph{rec-sys}). For instance, recommendations are a standard feature in streaming platforms such as YouTube and Netflix \cite{netflix}; but also in online social network platforms such as Facebook and Twitter, which moderate the users' viewing feeds \cite{recommend2}. Not surprisingly, the interplay between recommendations and caching attracted recent attention and prior works aimed to increase the caching hits or reduce routing costs, by either recommending already-cached files to users, or through the joint optimization of caching and recommendation decisions \cite{jordan-tmc, quek-twc21, sheng-twc21, 9528062, 9681354}. These important works, however, consider \emph{static} caching models and require knowing in advance the users' expected requests and their propensity to follow the recommendations.

Changing vantage point, one can observe that since recommendations bias the users towards viewing certain contents, they can effectively serve as predictions of the forthcoming requests. This prediction information, if properly employed, can hugely improve the efficacy of \emph{dynamic} caching policies, transforming their design from an online learning to an online optimization problem. Nevertheless, the caching policy needs to adapt to the accuracy of recommendations (i.e., of the predictions) and the users propensity to follow them -- which is typically unknown and potentially time-varying. Otherwise, the caching performance might as well deteriorate by following these misleading \emph{hints} about the future requests. The goal of this work is to tackle exactly this challenging new problem and answer the question: \emph{Can we leverage untrusted predictions in caching systems?}. We answer this question in the affirmative by  \emph{proposing online learning-based caching policies that utilize predictions (of unknown quality) to boost performance, if those predictions are accurate, while still maintaining robust performance bounds otherwise}.

\subsection{Methodology and Contributions}

Our approach is based on the theory of Online Convex Optimization (OCO) that was introduced in \cite{zinkevich2003online} and has since been applied in several decision problems \cite{hazan-book}. The basic premise of OCO is that a learner (here the caching system) selects in each slot $t$ a decision vector $x_t$ from a convex set $ \mathcal X$, without knowing the $t$-slot convex performance function $f_t(x)$, that changes with time. The learner's goal is to minimize the growth rate of \emph{regret} $R_T\!=\!\sumT f_t(x^\star)\!-\!f_t(x_t)$, where $x^\star\!=\!\arg\max_{x\in \mathcal X} \sumT f_t(x)$ is the benchmark solution designed with hindsight, i.e., with access to the entire sequence of future functions $\{f_t\}_{t=1}^T$. The online caching problem fits squarely in this setup, where $f_t(x)$ depends on the users requests and is unknown when the caching is decided. And previous works \cite{paschos-infocom19, Li-online-2021, stratis-2020, abhishek-sigm20} have proved that OCO-based caching policies achieve $R_T\!=\!O(\sqrt T)$, thus ensuring asymptotically zero average regret: $\lim_{T\rightarrow \infty} R_T/T\!=\!0$. 

Different from these important studies, we extend the learning model to include predictions that are available through the content recommendations. Improving the regret of learning policies via {predictions} is a relatively new area in machine learning research. For instance, \cite{antoniadis_20} focuses on the \emph{competitive-ratio} metric and developed algorithms that use untrusted predictions while maintaining worst-case performance bounds; while \cite{Lykouris-ML} applied similar ideas to the paging problem. However, it was shown in \cite{pmlr-v30-Andrew13} that such competitive-ratio algorithms cannot ensure sublinear regret, which is the performance criterion we employ here, in line with all recent works \cite{paschos-infocom19, Li-online-2021, stratis-2020, abhishek-sigm20, abhishek_nurips}.  

For regret-minimization with predictions, \cite{dekel-nips17} used predictions $\p c_t$ for the function gradient $c_t\!=\!\grad f_t(x_t)$ with guaranteed quality, i.e., $c_t^\top \p c_t \!\ge\! a \|c_t\|^2$, to reduce $R_T$ from $O(\sqrt T)$ to $O(\log T)$; and \cite{google-2020} enhanced this result by allowing some predictions to fail the quality condition. A different line of works uses \emph{regularizing functions} which enable the learner to adapt to the predictions' quality \cite{sridharan-nips2013}, \cite{mohri-aistats2016}. This idea is more promising for the caching problem where the recommendations might be inaccurate, or followed by the users for only arbitrary time windows; thus, we used it as starting point to develop our caching learning frameworks.

In specific, our approach relies on the Follow-The-Regularized-Leader (FTRL) algorithm \cite{shalev-ftrl} which we extend with predictions that offer \emph{optimism} by reducing the uncertainty about the next-slot functions. We study different versions of the caching problem. First, we design a policy (OFTRL) for the bipartite caching model \cite{femtocaching}, which generalizes the standard single cache case \cite{geulen2010regret, Lykouris-ML}. Theorem \ref{th:regret1} proves that $R_T$ is proportional to prediction errors ($\|c_t\!-\p c_t\|^2, \forall t$) diminishing to zero for perfect predictions; while still meeting the best achievable bound $O(\sqrt T)$ for the regular OCO setup (i.e., without using predictions)  \cite{abhishek-sigm20} even if all predictions fail. We continue with the \emph{elastic} caching problem, where the system resizes the used caches at each slot based, e.g., on volatile storage leasing costs \cite{giannakis-elasticJSAC19, jungho-wiopt, akamai}. The aim is to maximize the caching utility subject to a time-average budget constraint. This places the problem in the realm of constrained-OCO \cite{giannakis-TSP17, paschos-icml, victor, johansson-TSP2020}. Using a new saddle point analysis with predictions, we prove Theorem \ref{th:regret-e} which reveals how $R_T^{(e)}$ and the budget violation $V_T^{(e)}$ depend on the cache sizes and prediction errors, and how one can prioritize one metric over the other while achieving sublinear growth rates for both. 

The above algorithms utilize the rec-sys as the only source to predict the next time-slot cost $\tilde{c}_{t+1}$. In many cases, however, content providers might have access to \emph{multiple} such sources. For example,  a statistical user profiling model, a deep learning-based predictive model for content request in the next time slot \cite{9468921}, or even another rec-sys, see \cite{mult_recsys} and follow-up works. Those sources can be used to obtain multiple, and possibly contradicting, predictions. Our final contribution is, therefore, a \emph{meta-learning} caching framework that utilizes predictions from multiple sources to achieve the same performance as a caching system that used the best such source to start with. We show that the regret in the case of multiple sources can be strictly negative depending on the request sequence and the existence of a high-accuracy predictor. At the same time, the meta-learning caching framework maintains sublinear regret when all predictors fail. Finally, we show that this framework can also be applied in cases with single rec-sys, and discuss its pros and cons compared to the proposed regularization-based optimistic caching solutions.

In summary, the contributions of this work can be grouped as follows: 

$\bullet$ Introduces an online learning framework for bipartite and elastic caching networks that leverages predictions to achieve a regret that is upper-bounded by \emph{zero} for perfect recommendations and sub-linear $O(\sqrt T)$ for arbitrary bad recommendations. The results are based on a new analysis technique that improves the bounds by a factor of $\sqrt{2}$ compared to the state-of-art optimistic-regret bounds \cite{ mohri-aistats2016}, which we used in our recent work \cite{ocol}.

$\bullet$ Introduces a meta-learning framework that can utilize predictions from \emph{multiple} sources and jointly learns which of them to use, if any at all, and the optimal caching decisions. This framework achieves \emph{negative} regret in the best case, and sublinear regret in the worst case.

$\bullet$ Evaluates the policies using various request models and real datasets \cite{zink2008watch, mlds} and compares them with \emph{(i)} the best in hindsight benchmark; and \emph{(ii)} the online gradient descent policy, which is known to be regret-optimal and outperforms other caching policies \cite{paschos-infocom19, abhishek-sigm20}.

The work presents \emph{conceptual innovations}, i.e., using recommendations as an untrusted prediction source for caching, and using different online caching algorithms in an optimistic meta-learning algorithm; as well as \emph{technical contributions} such as the tightened bound of optimistic proximal FTRL (Theorem  \ref{th:regret1}) and the new optimistic proximal FTRL algorithm with budget constraints (Theorem  \ref{th:regret-e}). While we focus on data caching, the proposed algorithms can be directly applied to caching of services and code libraries in edge computing systems.

\textbf{Paper Organization}. The rest of this work is organized as follows. Sec. \ref{sec:model} introduces the system model and states formally the problem. Sec. \ref{sec:bipartite} presents the optimistic online caching policy for the bipartite graph, and Sec. \ref{sec:elastic} presents the respective policy and results for the case of elastic caching systems. Sec. \ref{sec:exps} introduces the meta-learning framework with the inclusion of multiple predictors, and Sec. \ref{sec:evaluation} presents our numerical evaluation of the proposed algorithms using synthetic and real traces. We conclude in Sec. \ref{sec:conclusions}.
\section{System model and Problem Statement}\label{sec:model}

\begin{figure}[!t]
	\centering
	\includegraphics[width=2.80in]{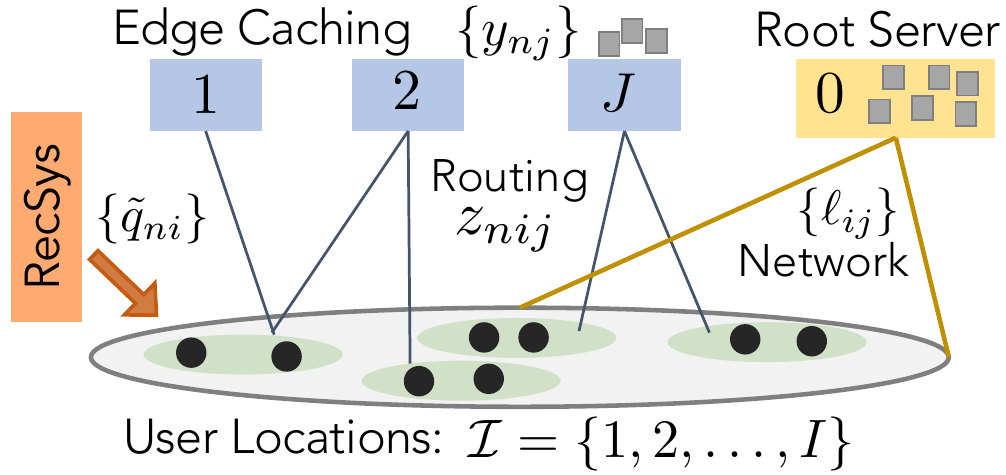} 
	\caption{\textbf{System Model}. A network of $\mathcal J$ caches serves file requests  from a set $\mathcal I$ of user locations. Unserved requests are routed to the Root Server. Caching decisions are aided via the recommendations provided by the  rec-sys.}
	\label{fig:system}
\end{figure}
\subsection{Model Preliminaries}

\textbf{Notation}. We use calligraphic capital letters, e.g., $\mathcal X$, to denote sets. Vectors are denoted with regular small letters, e.g., $a$, and we use the subscript $t$ to highlight a vector's dependence on a specific time slot e.g., $a_t$. When a compenent of the vector is indexed, the subscript is repurposed to denote that component's (multi-)index, while the $t$ moves to the superscript. i.e., for the $d$-dimensional vector $a_t$ we write $a_t=(a^t_1, a^t_2, \ldots, a^t_d)$. We denote with $\{a_t\}_{t=1}^T$ the sequence of vectors or parameters from slot $t=1$ up to slot $T$; whenever the horizon is not relevant we use $\{a_t\}_t$. We also use the shorthand sum notation $b_{1:t} = \sum_{i=1}^T b_i$. We also denote with $[T]$ the integer set ${1,2, \cdots, T}$. The notation for the model and the subsequently presented algorithms is summarized in Table \ref{table:notation}.

\begin{table}
	\caption{Key Notation}
	\centering%
	\renewcommand{\arraystretch}{1.25}
	\begin{tabularx}{0.485\textwidth}{|c|Y|}
		\hline
		\textbf{Parameters} & \textbf{Physical Meaning}\\
		\hline\hline %
		\multicolumn{2}{|c|}{\textit{\textbf{Caching Network}}}\\ \hline
		$\mathcal{J} (J)$ & Set (number) of caches \\ \hline
		$\mathcal{I} (I)$ & Set (number) of user locations \\ \hline
		$C_j$ & Capacity of cache $j$. $C_j \leq C, \forall j\!\in\! \mathcal J$ \\ \hline
		$\ell_{ij}$ & Indicator for connectivity of location $i$ to cache $j$ \\ \hline
		$N$ & Number of files \\ \hline
		$q_{ni}^t$ & Request issued by user $i$ for file $n$ at slot $t$\\
		\hline
        $w_{nij}$ & Utility for routing a unit of file $n$ from cache $j$ to $i$
        \\ \hline
        $P$ & Number of predictors \\ \hline
        $c_t$ & Gradient of the utility function 
         \\ \hline
        $\tilde{c}_t$ & A prediction for $c_t$\\ \hline
        $s_j^t$ & Price for unit storage in cache $j$\\ \hline
        \multicolumn{2}{|c|}{\textit{\textbf{ Decision Variables}}}\\ \hline
		$y_{nj}^t$ & Portion of file $n$ stored at cache $j$ at slot $t$ \\ \hline
		$z_{nij}^t$ & Portion of file $n$ routed from cache $j$ to $i$ at slot $t$ \\ \hline
		$x$ & A shorthand for the concatenated variables $(y, z)$ \\ \hline
		 \multicolumn{2}{|c|}{\textit{\textbf{ Learning Algorithms}}}\\ \hline
		$r_t(\cdot)$ & A strongly convex regularizer function (for $t\geq1$)\\ \hline
		$h_t$ & Prediction error $\|\tilde{c}_t - c_t\|$\\ \hline
		$\sigma_t$ & The change in the aggregated root of prediction error $\sqrt{h_{1:t}} - \sqrt{h_{1:t-1}}$\\ \hline
		$F_t$ & Experts performance vector at slot t\\ \hline	
		$u_t$ & Weight vector used to combine experts proposals\\
		\hline
	\end{tabularx}
	\label{table:notation} 
\end{table}

\textbf{Network}. The caching network includes a set of edge caches ${\cal J}\!=\!\{1,2,\dots, J\}$ and a root cache indexed with 0, as shown in Fig. \ref{fig:system}. The file requests emanate from a set of user locations ${\cal I}=\{1,2,\dots, I\}$. The connectivity between $\mathcal I$ and $\mathcal J$ is modeled with parameters $\ell\!=\!\big(\ell_{ij}\in \{0,1\}: i\!\in\!\mathcal{I}, j\!\in\!\mathcal{J} \big)$, where $\ell_{ij}\!=\!1$ if cache $j$ can be reached from location $i$. We consider the general case where the caches have overlapping coverage; thus, each user can be (potentially) served by one or more edge caching. The root cache is within the range of all users in $\mathcal{I}$. This is a general non-capacitated bipartite model, see \cite{paschos-book} for an overview of caching models; and extends the celebrated femtocaching model \cite{femtocaching} since the link qualities (which are captured through the utility gains; see below) not only may vary with time, but can do so in an arbitrary (i.e., non-stationary) fashion. This latter feature is particularly important for the realistic modeling of volatile wireless edge caching systems \cite{paschos-magazine, 9253571}.

\textbf{Requests}. The system operation is time slotted, $t\!\!=\!\!1,2,\dots,\!T$. Users submit requests for obtaining files from a library $\mathcal{N}$ of $N$ files with unit size; we note that the analysis can be readily extended to files with different sizes, This will be made clear in Sec. \ref{subsec:ps}. Parameter $q^{t}_{ni}\!\in\!\{0,1\}$ indicates the submission of a request for file $n \! \in \! \mathcal{N}$ by a user at location $i \! \in \!\mathcal{I}$ in the beginning of slot $t$. At each slot we assume there is one request\footnote{The proposed policies will still deliver the same regret guarantees when requests are batched before an update. However, the Lipchitz constant will be scaled according to the batch size, and this will affect accordingly the constant factor of the guarantees.}; i.e., the caching decisions are updated after every request, as in LFU and LRU policies,  \cite{giovanidis-mLRU, leonardi-implicit}. Hence, the request process comprises successive vectors $q_t\!=\!(q_{ni}^t\!\in\!\{0,1\}: n\!\in\!\mathcal N, i\!\in\!\mathcal I)$ from the set:
\begin{equation}
	\mathcal{Q}=\bigg\{q\in \{0,1\}^{N\cdot I} ~\Big |~ \sum_{n\in\mathcal{N}}\sum_{i\in\mathcal{I}} q_{ni}=1\bigg\}. \notag
\end{equation}
We make no assumptions for the request pattern; it might follow a fixed or time-varying distribution that is unknown to the system; and can be even selected strategically  by an \emph{adversary} aiming to degrade the caching operation. If a policy's performance is satisfactory under this model, it is ensured to achieve (at least) the same performance  for other request models.

\textbf{Recommendations}. There is a recommender system (\emph{rec-sys}) that suggests files to each user $i\!\in\!\mathcal I$, see \cite{netflix} for the case of Netflix. User $i$ requests one of the recommended files with a certain probability that captures the user's propensity to follow the recommendations. Unlike prior works that consider these probabilities fixed (or assume that the recommendation following event is a stationary stochastic process) \cite{akis-wowmom20, jordan-tmc}, we model them as unknown and possibly time-varying. Namely, no assumption on their quality is guaranteed to remain valid.

A key point in our approach is that the content recommendations, if properly leveraged, can serve as \emph{predictions} for the next-slot requests which are otherwise unknown. We denote with $\tilde q_t$ the prediction for the request $q_t$ that the system will receive at the beginning of slot $t$, and we assume that $\tilde q_t$ is available at the end of slot $t\!-\!1$, i.e., when the rec-sys provides its recommendations. Essentially, the recommender system is an approach for obtaining prediction for the next request. The recommendation can be mapped to predictions in different ways. For example, the caching system can set $\p{q}_{\hat n\hat i}^{\ t+1} = 1$ and $\p q_{ni}^{\ t+1} = 0, \forall (n,i)\neq (\hat n, \hat i)$, where $(\hat n, \hat i)$ is the request with the highest predicted probability (top recommended file)\footnote{Note that our caching policy is orthogonal to the mechanism that maps the recommendations to predictions. Namely, our results will be stated in terms of the prediction error.}. In section \ref{sec:exps}, we study the case where a set $\mathcal{P}=\{1, \ldots, P\}$ of $P$ different predictors (other than the recommendatino system) are available, each one offering a prediction $q_t^{(p)}, p \in \mathcal{P}$ at every slot $t$.

\textbf{Caching}. Each cache $j\!\in\!\mathcal J$ stores up to $C_j\!<<\!N$ files, while the root cache stores the entire library, i.e., $C_0\!\geq \!N$. We also define $C\!=\! \max_{j\in\mathcal J} C_j$. Following the standard femtocaching model \cite{femtocaching}, we perform caching using the \emph{Maximum Distance Separable} (MDS) codes, where files are split into a fixed number of $F$ chunks, which include redundancy chunks. A user can decode the file if it receives any $F$-sized subset of its chunks. For large values of $F$, the MDS model allows us to use continuous caching variables.\footnote{Large files are composed of thousands of chunks, leading to a small chunk size (compared to the original file). This induces practically negligible errors in the utility function \cite[Sec. 3.3]{paschos-book}. In addition, even for exact discrete caching, relaxing the integrality constraints and solving the continuous version is an essential first step which is then followed by a randomized rounding technique (see, e.g., \cite[Sec. 6]{tareq-jrnl}).} Hence, we define the  variable $y_{nj}^t\!\in\![0,1]$ which denotes the portion of $F$ chunks of file $n\!\in\!\cal N$ stored at cache $j\!\in\! \cal J$, and we introduce the $t$-slot caching vector $y_t\!=\!(y_{nj}^t: n\!\in\! \mathcal N, j\!\in\!\mathcal J)$ that belongs to set: 
\[
\mathcal{Y}=\bigg\{y\in [0,1]^{N\cdot J} ~\Big|~ \sum_{n\in\mathcal{N}}y_{nj}\leq C_j, ~j\in \mathcal J\bigg\}.
\]

\textbf{Routing}. Since each user location $i\in\mathcal{I}$ may be connected to multiple caches, we need to introduce  routing variables. Let $z_{nij}^t$ denote the portion of request $q_{ni}^{t}$ served by cache $j$. In the MDS caching model the requests can be simultaneously routed from multiple caches and, naturally, we restrict\footnote{This practical constraint is called the \emph{inelastic} model and compounds the problem, cf. \cite{abhishek-sigm20} for the simpler elastic model.}  the amount of chunks not to exceed $F$. Hence, the $t$-slot routing vector $z_t=(z_{nij}^t\!\in\![0,1]: n\!\in\!\mathcal N, i\!\in\!\mathcal I, j\!\in\!\mathcal J)$ is drawn from:
\[
\mathcal{Z}=\bigg\{z\in [0,1]^{N\cdot J\cdot I} ~\Big|~ \sum_{j\in\mathcal{J}}z_{nij}\leq 1, ~n\in \mathcal N, i\in\mathcal I\bigg\}.
\]
Requests that are not (fully) served by the edge caches $\mathcal J$ are served by the root server that provides the missing chunks. This decision needs not to be explicitly modeled as it is directly determined by the routing vector $z_t$.

\subsection{Problem Statement} 
\label{subsec:ps}
\textbf{Cache Utility \& Predictions}. We use parameters $w_{nij}\in[0, w]$ to model the system utility when delivering a chunk of file $n\!\in\! \cal N$ to location $i\!\in\! \cal I$ from cache $j\!\in\!\cal J$, instead of using the root server. This utility model can be used to capture bandwidth or delay savings, and other edge-caching gains in wired or wireless networks. The caching benefits can in general differ for each cache and user location, and may vary with time. Note that the cache-hit maximization problem is a special case of this setting \cite{paschos-jsac}. To streamline presentation we introduce vector $x_t\!= \!(y_t, z_t)\in \mathbb R^m$, with $m\!=\!NIJ\!+\!NJ$, and define the system utility in slot $t$ as:
\begin{equation}\label{eq:biput}
	f_t(x_t)= \sum_{n\in\mathcal{N}}\sum_{i\in\mathcal{I}}\sum_{j\in\mathcal{J}} w_{nij} q^t_{ni}z_{nij}^t\ ,
\end{equation}
and we denote its gradient $c_{t}\!=\!\nabla f_{t}(x_{t})$. As it will become clear, our analysis holds also for non-linear concave functions $f_t(x)$; this generalization is useful in case, e.g., we wish to enforce fairness in the dispersion of caching gains across the user locations \cite{jungho-wiopt}.

The main challenge in online caching is the following: at the end of each slot $t$ where we need to decide the cache configuration, the utility function $f_{t+1}$ is not available. Indeed, this function depends on the next-slot request $q_{t+1}$, which, by the time it gets revealed, $y_{t+1}$ would have already been decided and fixed\footnote{In our case, since the routing is directly shaped by the caching, this restriction affects also $z_{t+1}$.}, see \cite{paschos-infocom19, abhishek-sigm20, Lykouris-ML}. This is also the timing of the LRU/LFU policies \cite{giovanidis-mLRU, leonardi-implicit}. However, the recommendations provided to users can be used to form a predicted request $\p q_{t+1}$. Then, we can use $\p q_{t+1}$ to create a prediction for the next slot function $\p f_{t+1}(\cdot)$ through its gradient $\p c_{t+1}$, (recall that $f_{t+1}(\cdot)$ is a linear function that is parameterized by its gradient $c_{t+1}$).


\textbf{Benchmark}. In such learning problems, it is important to understand the objective that our algorithm aims to achieve. If we had access to an oracle for the requests $\{q_t\}_{t=1}^T$ (and the utility parameters) we could have devised the utility-maximizing static caching and routing policy $x^\star=(y^\star, z^\star)$, by solving the following convex optimization problem:
\begin{align}
\mathbb P_1:\quad \max_{x}  \,\,\,\,\,\,& \sum_{t=1}^T f_t(x) \label{eq:opt1a}\\
\text{s.t.   }&\,\, z_{nij} \leq y_{nj}\ell_{ij}, \quad i\in\mathcal I, j\in\mathcal J, n\in\mathcal N, \label{eq:opt1b}\\
	& \,\,z\in \mathcal Z, \,\,\,\, y\in\mathcal Y
	\label{eq:opt1c},
\end{align} 
where \eqref{eq:opt1b} ensure the routing decisions for each requested file use only caches that store enough chunks of that file.Note that the case of files of different sizes corresponds to replacing the set $\mathcal{Y}$ with $\mathcal{Y}' = \{y\in [0,1]^{N\cdot J} ~|~ \sum_{n\in\mathcal{N}}v_n y_{nj}\leq C_j, ~j\in \mathcal J\}$ For some general size vector $v \in \mathbb{R}_+^N$. Such a change will not affect the mathematical characteristics of the above optimization problem (both sets correspond to linear constraints).

Let us define the convex set of constraints:
\begin{align}
\mathcal X=\big\{\left\{\mathcal Y\times \mathcal Z\right\}\cap \{\eqref{eq:opt1b} \}\big\} \label{eq:const_set}
\end{align}
that we will use henceforth to streamline presentation.

Clearly, this hypothetical solution $x^\star$ can be designed only with \emph{hindsight} and is the benchmark for evaluating our online learning policy $\pi$ which outputs $\{x_t\}_t$. Thus, we use the regret metric:
\begin{align}\label{eq:regret}
R_T(\pi)= \sup_{ \{f_t\}_{t=1}^T }\left[ \sum_{t=1}^T f_t\big(x^\star\big)-\sum_{t=1}^T f_t\big(x_t\big)\right],
\end{align}

which quantifies the performance gap of $\pi$ from $x^\star$, for any possible sequence of requests or, equivalently, functions $\{f_t\}_t$. Our goal is to find a policy that achieves sublinear regret, $R_T(\pi)\!=\!o(T)$, thus ensuring the average performance gap $R_T/T$ will diminish as $T$ grows. This policy, similar to other online policies, decides $x_{t+1}$ at the end of each slot $t$ using the previous utility functions $\{f_\tau\}_{\tau=1}^t$ and the next-slot prediction $\tilde f_{t+1}$ devised from the rec-sys.

Lastly, note that, in principle, the regret metric can be negative. This is especially true for optimistic policies. To see why, recall that $x^\star$ is the best \emph{fixed} caching configuration, whereas $x_t$ is allowed to change for each $t$. Hence, it might happen, e.g., that $x_t$ performs better than $x^\star$ on some steps, $f_t(x_t) \geq f_t(x^\star)$, while performing similar to $x^\star$ in the remaining ones $f_t(x_t) \approx f_t(x^\star)$. Of course, the occurrence of such an event depends on the request sequence and the policy that determines $x_t$. In the following sections, we show that optimism can greatly decrease the upper bound on $R_T$, increasing the chances of negative regret.

\section{Optimistic Bipartite Caching}\label{sec:bipartite}

Unlike recent caching solutions that rely on Online Gradient Descent (OGD)  \cite{paschos-infocom19} or on the Follow-the-Perturbed-Leader (FTPL) policy \cite{abhishek-sigm20}, our approach draws from the \emph{Follow-The-Regularized-Leader} (FTRL) policy, cf. \cite{mcmahan-survey17}, appended with prediction-adaptive regularizers. A key element in our proposal is  the \emph{optimism} emanating from the availability of predictions, namely the content recommendations that are offered to users by the rec-sys in each slot.

Let us begin by defining the proximal regularizers\footnote{A proximal regularizer induces a proximal mapping for the objective function; see \cite[Ch. 6.1]{beck-book} for the formal definition.}: 
\begin{align}
r_0(x)=\bm I_{\mathcal X}(x), \quad	r_t(x)=\frac{\sigma_t}{2}\|x-x_t\|^2,\,\,t\geq 1 \label{regula}
\end{align}
where $\|\cdot \|$ is the Euclidean norm, and $\bm I_{\mathcal X}(x)\!=\!0$ if $x\!\in\! \mathcal X$ and $\infty$ otherwise. We apply properly selected regularizing parameters, which change the strong convexity of $r_t$ according to the predictions quality until $t$, and also ensure $r_t(x)\geq 0, \forall t$, namely:
\begin{align}
	&\sigma_1 = \sigma \sqrt{h_1}, \quad \sigma_t=\sigma \left( \sqrt{ h_{1:t} } - \sqrt{h_{1:t-1} }\right), \quad t\geq2  \label{regulb}
	\\ 
	&\text{with} \quad h_t=\|c_t-\p c_t\|^2, \notag 
\end{align}
where $\sigma\!\geq \!0$, $c_t\!=\! \grad f_t(x_t)$, and we used the shorthand sum notation $h_{1:t}\!=\!\sum_{i=1}^th_i$ for the aggregate prediction errors during the first $t$ slots.  The basic step of the algorithm is: 
\begin{align}
x_{t+1}=\arg\min_{x\in\mathbb R^m}\Big\{ r_{0:t}(x)- (c_{1:t} + \p c_{t+1})^\top x	\Big\},   \label{proxy-step}
\end{align}
which calculates the decision vector using past utility observations $c_{1:t}$, the aggregate regularizer $r_{0:t}(x)$ and the prediction $\p c_{t+1}$. The update employs the negative gradients as it concerns a maximization problem. Henceforth, we refer to \eqref{proxy-step} as the \emph{optimistic} FTRL (OFTRL) update. 

Policy $\pi_{obc}$ is outlined in Algorithm \ref{alg1}. In each iteration, OBC solves a convex optimization problem, \eqref{proxy-step}, involving a projection on the feasible set $\mathcal X$ (via $r_0(x)$). For the latter, one can rely on fast-projection algorithms specialized for caching, e.g., see \cite{paschos-infocom19}; while it is possible to obtain a closed-form solution for the OFTRL update for linear functions. We quantify next the performance of Algorithm \ref{alg1}. 

\begin{algorithm}[t]
	\nl \textbf{Input}: $\{\ell_{ij}\}_{(i,j)}$; $\{C_j\}_j$; $\mathcal{N}$; $x_1\!\in\!\mathcal X$; $\sigma=\sqrt{2}/D_{\mathcal X}$.\\%
	\nl \textbf{Output}: $x_t=(y_t, z_t)$, $\forall t$.\\%
	\nl \For{ $t=1,2,\ldots$  }{
		\nl Route request $q_{t}$  according to configuration $x_t$\\
		\nl Observe system utility $f_t(x_t)$ \\
		\nl Observe the new prediction $\p c_{t+1}$\\
		\nl Update the regularizer $r_{0:t}(x)$ using \eqref{regula}-\eqref{regulb}\\
		\nl Calculate the new  policy $x_{t+1}$ using \eqref{proxy-step}\\
	}
	\caption{{Optimistic Bipartite Caching ($\pi_{obc}$)}}\label{alg1}
	\setlength{\intextsep}{0pt} 
\end{algorithm} 

\begin{mdframed}
\begin{theorem}\label{th:regret1}
Algorithm 1 ensures the regret bound:
\begin{align}
R_T\leq 2\sqrt{1\!+JC} \sqrt{\sum_{t=1}^T  \|c_t-\p c_t\|^2 }.\notag
\end{align}
\end{theorem}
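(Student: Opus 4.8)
\textbf{Overall approach.} The plan is to invoke the standard regret bound for optimistic/proximal FTRL and then optimize the choice of the regularizer parameter $\sigma$ to obtain the stated clean bound. Recall that for FTRL with proximal regularizers $r_{0:t}$ and a one-step-ahead hint $\p c_{t+1}$ for the next gradient, one has a bound of the form
\begin{align}
R_T \;\leq\; r_{0:T}(x^\star) \;+\; \sum_{t=1}^{T} \big(\text{stability term at slot } t\big), \notag
\end{align}
where the stability term is controlled by pairing the prediction error $c_t - \p c_t$ against the movement of the iterate, and using that $r_{0:t}$ is $\sigma\sqrt{h_{1:t}}$-strongly convex with respect to $\|\cdot\|$. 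Concretely, I would establish (or cite, since it is the classical optimistic-FTRL lemma, e.g.\ following \cite{mohri-aistats2016, mcmahan-survey17}) that each per-slot term is at most $\frac{\|c_t - \p c_t\|^2}{2\,\sigma\sqrt{h_{1:t}}} = \frac{h_t}{2\sigma\sqrt{h_{1:t}}}$, because the added proximal center $x_t$ in $r_t$ exactly cancels the "leading-leader" drift and the only residual is the optimism gap.

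\textbf{Key steps in order.} First, write $R_T \le r_{0:T}(x^\star) + \sum_{t=1}^T \big[ (c_{1:t}+\p c_{t+1})^\top x_{t+1} - (c_{1:t-1}+\p c_t)^\top x_t + r_t(x_t) - c_t^\top x_t\big]$-type telescoping (the precise form is the FTRL "be-the-leader" expansion specialized to proximal regularizers), and bound the summand by the strong-convexity/Fenchel argument to get $\sum_t \tfrac{h_t}{2\sigma\sqrt{h_{1:t}}}$. Second, use the elementary inequality $\sum_{t=1}^T \frac{h_t}{\sqrt{h_{1:t}}} \le 2\sqrt{h_{1:T}}$ (a standard "sum of increments over running-root" bound), giving a contribution $\le \frac{1}{\sigma}\sqrt{h_{1:T}}$. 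Third, bound the regularizer term: since $r_t(x)=\tfrac{\sigma_t}{2}\|x-x_t\|^2$ with $\sum_t \sigma_t = \sigma\sqrt{h_{1:T}}$ (the telescoping in \eqref{regulb}) and $\|x^\star - x_t\| \le D_{\mathcal X}$ where $D_{\mathcal X}$ is the Euclidean diameter of $\mathcal X$, we get $r_{0:T}(x^\star) \le \tfrac{\sigma}{2} D_{\mathcal X}^2 \sqrt{h_{1:T}}$ (and $r_0(x^\star)=0$ since $x^\star\in\mathcal X$). Fourth, combine: $R_T \le \big(\tfrac{\sigma}{2}D_{\mathcal X}^2 + \tfrac{1}{\sigma}\big)\sqrt{h_{1:T}}$, and choose $\sigma = \sqrt{2}/D_{\mathcal X}$ (exactly the value in Algorithm~\ref{alg1}) to balance the two terms, yielding $R_T \le \sqrt{2}\,D_{\mathcal X}\sqrt{h_{1:T}}$. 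Fifth and last, bound the diameter of $\mathcal X$: since $y\in\mathcal Y$ contributes $\sum_{n,j} y_{nj}^2 \le \sum_j C_j \le JC$ and $z$ lives in $[0,1]^{NIJ}$ with the per-$(n,i)$ simplex-type constraint $\sum_j z_{nij}\le 1$ contributing at most $1$ per $(n,i)$ pair — wait, more carefully one shows $\|x\|^2 \le 1 + JC$ type bound for any feasible $x$ (using that the routing is effectively "used" only for the single requested file), so $D_{\mathcal X} \le 2\sqrt{1+JC}$. Substituting gives $R_T \le 2\sqrt{1+JC}\,\sqrt{\sum_{t=1}^T \|c_t - \p c_t\|^2}$, as claimed.

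\textbf{Main obstacle.} The routine part is the optimistic-FTRL telescoping and the $\sigma$-balancing; the delicate points are (i) getting the constant in the per-slot stability bound exactly right — in particular making sure the proximal term $r_t$ really does kill the $x_{t+1}\!-\!x_t$ cross term without leaving an extra factor, which is the source of the claimed $\sqrt{2}$ improvement over \cite{mohri-aistats2016} advertised in the introduction — and (ii) the geometric computation of $D_{\mathcal X}$, where one must exploit the specific structure of $\mathcal X$ (the coupling constraints \eqref{eq:opt1b} and the fact that only one file is requested per slot, so the relevant norm of feasible points is much smaller than the ambient dimension would suggest) to land on the clean $2\sqrt{1+JC}$ rather than something dimension-dependent. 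I expect step (i) — pinning down the tight per-step constant — to be where the real care is needed, since that is precisely the technical contribution flagged in the paper's contribution list.
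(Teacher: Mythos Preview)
Your overall route is the paper's: obtain $R_T \le r_{1:T}(x^\star) + \tfrac{1}{2}\sum_t h_t/\sigma_{1:t}$, telescope $\sigma_{1:t}=\sigma\sqrt{h_{1:t}}$, apply $\sum_t h_t/\sqrt{h_{1:t}}\le 2\sqrt{h_{1:T}}$, balance $\sigma=\sqrt{2}/D_{\mathcal X}$, and finish with a diameter bound. Two places, however, do not close at the stated constant.

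\emph{Diameter.} Your computation gives $\|x\|^2 \le 1+JC$ and then $D_{\mathcal X}\le 2\sqrt{1+JC}$ via the triangle inequality on norms; substituting into $\sqrt{2}\,D_{\mathcal X}\sqrt{h_{1:T}}$ yields $2\sqrt{2}\sqrt{1+JC}\sqrt{h_{1:T}}$, which is off by $\sqrt{2}$. The paper bounds the \emph{squared difference} directly: since all coordinates lie in $[0,1]$, $(y_{nj}-y_{nj}^t)^2\le |y_{nj}-y_{nj}^t|$ (and likewise for $z$), so $\|x-x_t\|^2 \le \|x-x_t\|_1$; then the $\ell_1$ triangle inequality together with the one-request-per-slot observation (routing coordinates outside the requested $(n',i')$ can be zeroed without changing utility) gives $\|x-x_t\|_1\le 2JC+2$, i.e.\ $D_{\mathcal X}^2=2(1+JC)$, which is exactly what is needed.

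\emph{The $\sqrt{2}$ in the stability term.} You correctly flag this as the crux but leave it unresolved; citing \cite{mohri-aistats2016} would reintroduce the extra factor. The paper's device is a new Optimistic Strong FTRL Lemma (Lemma~\ref{lemm:OSFTRL}) that extends \cite[Lemma~5]{mcmahan-survey17} to the predictive update: it shows $R_T \le r_{0:T}(x^\star)+\sum_t\big(v_{0:t}(x_t)-v_{0:t}(x_{t+1})-r_t(x_t)\big)$ with $v_{0:t}(x)=-c_{1:t}^\top x+r_{0:t}(x)$, and then applies \cite[Lemma~7]{mcmahan-survey17} with $\phi_1=v_{0:t}+(c_t-\p c_t)^\top(\cdot)$ (so that $x_t=\arg\min\phi_1$ by construction of the OFTRL step) and $\phi_2=v_{0:t}$, which delivers each summand $\le \tfrac{1}{2}\|c_t-\p c_t\|_{(t),\star}^2$ with no spurious factor. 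That lemma is the missing ingredient in your step~(i).
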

\end{mdframed}

For the proof, we modify the ``strong FTRL lemma" \cite[Lemma 5]{mcmahan-survey17} by adding predictions for next-slot utility function. The following lemma bounds the regret in terms of the difference between two values of a strongly convex function evaluated at $x_t$ and at $x_{t+1}$, for each $t$.

\begin{lemma}{(Optimistic Strong FTRL Lemma)} \label{lemm:OSFTRL} Let $v_t(x) = -c_{t}^\top x_t + r_t(x_t)$, and $v_{0:t}(x) = -c_{1:t}^\top x_t + r_{0:t}(x_t)$. Let $x_{t+1}$ be selected according to \eqref{proxy-step}. Then:
\begin{multline}
    R_T \leq r_{0:T}(x^\star) + \sumT v_{0:t}(x_t) - v_{0:t}(x_{t+1}) -r_t(x_t)
    \\
    + \tilde c_{T+1}^\top (x_{T+1} - x^\star) 
    \label{eq:OSFTRL}
\end{multline}
\end{lemma}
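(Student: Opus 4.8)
The plan is to follow the proof of the classical strong FTRL lemma \cite[Lemma~5]{mcmahan-survey17}, carrying along the one extra term that arises because the final hint $\tilde c_{T+1}$ is used to compute $x_{T+1}$ but is never matched by an observed gradient. Throughout I read $v_t(x)=-c_t^\top x+r_t(x)$ and $v_{0:t}(x)=\sum_{\tau=0}^{t}v_\tau(x)=-c_{1:t}^\top x+r_{0:t}(x)$ as functions of $x$.

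First I would linearize: since each $f_t$ is concave with $c_t=\grad f_t(x_t)$, $R_T\le\sum_{t=1}^{T}c_t^\top(x^\star-x_t)$, with equality for the linear utility \eqref{eq:biput}. Each summand rewrites as $c_t^\top(x^\star-x_t)=\big(v_t(x_t)-v_t(x^\star)\big)+\big(r_t(x^\star)-r_t(x_t)\big)$; summing over $t$ and using $r_0(x^\star)=\bm I_{\mathcal X}(x^\star)=0$ (so that $\sum_{t=1}^T r_t(x^\star)=r_{0:T}(x^\star)$) reduces the claim to proving
\begin{align}
\sum_{t=1}^{T}\!\big(v_t(x_t)-v_t(x^\star)\big)\le\sum_{t=1}^{T}\!\big(v_{0:t}(x_t)-v_{0:t}(x_{t+1})\big)+\tilde c_{T+1}^\top(x_{T+1}-x^\star). \notag
\end{align}

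Next I would prove this by a ``be-the-leader'' telescoping. Using $\sum_t v_t(x^\star)=v_{0:T}(x^\star)$ (again $r_0(x^\star)=0$) and $v_t(x_t)=v_{0:t}(x_t)-v_{0:t-1}(x_t)$, and shifting the index in $\sum_{t=1}^{T}v_{0:t}(x_{t+1})=\sum_{s=2}^{T+1}v_{0:s-1}(x_s)$, the interior terms cancel and the left-hand side minus the first sum on the right collapses to $v_{0:T}(x_{T+1})-v_{0:0}(x_1)-v_{0:T}(x^\star)$; since $x_1\in\mathcal X$ makes $v_{0:0}(x_1)=r_0(x_1)=0$, it remains only to check $v_{0:T}(x_{T+1})\le v_{0:T}(x^\star)+\tilde c_{T+1}^\top(x_{T+1}-x^\star)$. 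This is the one place the algorithm enters, and it is immediate: the inequality is equivalent to $v_{0:T}(x_{T+1})-\tilde c_{T+1}^\top x_{T+1}\le v_{0:T}(x^\star)-\tilde c_{T+1}^\top x^\star$, and $v_{0:T}(x)-\tilde c_{T+1}^\top x=r_{0:T}(x)-(c_{1:T}+\tilde c_{T+1})^\top x$ is exactly the objective minimized in \eqref{proxy-step}, so $x_{T+1}$ is a global minimizer and the bound holds at every $x$, in particular $x^\star$.

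I expect the only real difficulty to be bookkeeping rather than any single estimate: one must handle $r_0=\bm I_{\mathcal X}$ carefully so that the boundary terms $r_0(x_1)$, $r_0(x^\star)$ vanish and the unconstrained $\argmin$ of \eqref{proxy-step} can legitimately be compared against the feasible $x^\star$, and one must not try to telescope $\sum_t\big(v_{0:t}(x_t)-v_{0:t}(x_{t+1})\big)$ alone (it collapses only after being combined with $\sum_t v_t(x_t)$). Worth noting is that the lemma requires no optimality property of the intermediate iterates $x_2,\dots,x_T$; that structure is exploited only afterwards, when each difference $v_{0:t}(x_t)-v_{0:t}(x_{t+1})$ is bounded via the $\sigma_{1:t}$-strong convexity of $v_{0:t}$ in the proof of Theorem~\ref{th:regret1}.
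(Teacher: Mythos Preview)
Your argument is correct and follows essentially the same route as the paper's proof: both rely on the decomposition $v_t(x_t)=v_{0:t}(x_t)-v_{0:t-1}(x_t)$, a reindexing of the resulting sums, the vanishing boundary term $v_{0:0}(x_1)=r_0(x_1)$, and the single use of optimality at $x_{T+1}$ to bound $v_{0:T}(x_{T+1})-\tilde c_{T+1}^\top x_{T+1}\le v_{0:T}(x^\star)-\tilde c_{T+1}^\top x^\star$. The only cosmetic difference is that you first peel off the $r_t(x^\star)-r_t(x_t)$ terms and then telescope, whereas the paper telescopes first and unpacks the definition of $v_t$ at the end; and you use $r_0(x_1)=0$ exactly while the paper simply drops $-r_0(x_1)\le 0$.
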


\begin{proof}[Proof of Lemma \ref{lemm:OSFTRL}]
\begin{align}
    &\sumT v_t(x_t) - (v_{0:T}(x^\star) - \tilde c_{T+1}^\top x^\star) \notag
    \\
    &=\sumT \left( v_{0:t}(x_t) - v_{0:t-1}(x_t)  \right) \notag -(v_{0:T}(x^\star) - \tilde c_{T+1}^\top x^\star) 
    \\
    &\leq \sumT v_{0:t}(x_t) + \sumT \left( - v_{0:t-1}(x_t)\right)  \notag
    \\
    &-(v_{0:T}(x_{T+1}) - \tilde c_{T+1}^\top x_{T+1}) \notag \quad\quad (\text {by def. of $x_{T+1}$}) 
    \\
    &\stackrel{\text{(a)}}{\leq}\sumT v_{0:t}(x_t)  - v_0(x_1)+ \sum_{t=1}^{T-1} \left( - v_{0:t}(x_{t+1}) \right) \notag
    \\
    &- v_{0:T}(x_{T+1})+ \tilde c_{T+1}^\top x_{T+1} \notag 
    \\
    &\stackrel{\text{(b)}}{\leq}\sumT \big(v_{0:t}(x_t)  - v_{0:t}(x_{t+1})\big) + \tilde c_{T+1}^\top x_{T+1} 
\end{align}
where equality (a) follows by reindexing the second sum (i.e., changing the sum index from $t$ to $t+1$), and inequality (b) by dropping the non-positive term $-v_0(x_1) = - r_0(x_1)$ and appending the term $- v_{0:T}(x_{T+1})$ to the second sum.
Thus, we have that:
\begin{align}
    &\sumT v_t(x_t) - v_{0:T}(x^\star) + \tilde c_{T+1}^\top x^\star \notag
    \\
    &\leq\sumT \big(v_{0:t}(x_t)  - v_{0:t}(x_{t+1})\big) + \tilde c_{T+1}^\top x_{T+1}.
\end{align}
Expanding the definition of $v_t(x_t)$ and rearranging give the regret inequality.
\begin{multline}
\renewcommand{\qedsymbol}{}
c_{0:T}^\top x^\star - \sumT c_t^\top x_t \leq r_{0:T}(x^\star) + \sumT \big(v_{0:t}(x_t)  - v_{0:t}(x_{t+1})\big)
\\
-r_t(x_t) + \tilde c_{T+1}^\top (x_{T+1} - x^\star)\notag.
\end{multline}
\end{proof}

With the weak assumption that the caching system will be notified upon the serving of the last request, we can set $\tilde c_{T+1}=0$ and hence cancel the last term in the above inequality. Otherwise, it will be an additional constant factor in the regret bound\footnote{It is possible to slightly change the semantics of the algorithm to avoid this rare case by making the adversary first commit and hide the cost function, and then the learner picks the action, see discussion of \cite[Thm 7.29] {orabona2021modern}.}.
Next, we will make use of the following results to bound each $v_{0:t}(x_t) - v_{0:t}(x_{t+1})$ term:
\begin{lemma}{\cite[Lemma 7]{mcmahan-survey17}} \label{lemm:OSFTRL2}
let $\phi_1\!:\! \mathbb{R}^n \to\! \mathbb{R}$ be a convex function such that $x_1 \!=\! \arg\min_{x} \phi_1$. Let $\psi$ be a convex function such that  $\phi_2(x) \!=\! \phi_1(x) \!+\! \psi(x)$ is strongly convex w.r.t norm $\|\cdot\|$. Then, for any $b \in \partial \psi(x_1)$ and $x'$, we have that $\phi_2(x_1) \!-\! \phi_2(x^{'})\! \leq \!\frac{1}{2}\|b\|_\star^2$.
\end{lemma}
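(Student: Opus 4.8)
The plan is to derive the inequality directly from two facts: the first-order optimality of $x_1$ for $\phi_1$, and the strong convexity of $\phi_2$. First I would record the subgradient bookkeeping. Since $x_1 = \arg\min_x \phi_1$ and $\phi_1$ is finite-valued and convex on all of $\mathbb{R}^n$, first-order optimality gives $0 \in \partial \phi_1(x_1)$. Because $\phi_1$ is real-valued everywhere, the Moreau--Rockafellar sum rule applies without a constraint qualification, so $\partial \phi_2(x_1) = \partial \phi_1(x_1) + \partial \psi(x_1)$; combining this with the hypothesis $b \in \partial \psi(x_1)$ yields $b = 0 + b \in \partial \phi_2(x_1)$. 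This is the key reduction: the minimizing property of $x_1$ lets us promote the $\psi$-subgradient $b$ to a $\phi_2$-subgradient.

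Next I would invoke (unit-modulus) strong convexity of $\phi_2$ at $x_1$ along the subgradient $b$. This gives, for every $x'$,
\[
\phi_2(x') \geq \phi_2(x_1) + b^\top(x' - x_1) + \tfrac{1}{2}\|x' - x_1\|^2,
\]
which rearranges to $\phi_2(x_1) - \phi_2(x') \leq -b^\top(x'-x_1) - \tfrac{1}{2}\|x'-x_1\|^2$.

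The final step is to bound this right-hand side uniformly in $x'$. Writing $u = x' - x_1$ and applying the dual-norm (Fenchel--Young) inequality $-b^\top u \leq \|b\|_\star \|u\|$, the bound becomes $\|b\|_\star\|u\| - \tfrac12\|u\|^2$. Treating this as a scalar function of $s = \|u\| \ge 0$, its maximum is attained at $s = \|b\|_\star$ and equals $\tfrac12\|b\|_\star^2$. Since the chain holds for arbitrary $x'$, we conclude $\phi_2(x_1) - \phi_2(x') \le \tfrac12\|b\|_\star^2$.

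I expect the only delicate points to be the justification of the subdifferential sum rule and the precise strong-convexity modulus. The sum rule is clean here precisely because $\phi_1$ is everywhere finite (so no relative-interior overlap condition is needed), and the constant $\tfrac12$ rather than $\tfrac{1}{2\mu}$ fixes the convention that $\phi_2$ is taken to be $1$-strongly convex with respect to $\|\cdot\|$; under a general modulus $\mu$ the same argument would yield $\tfrac{1}{2\mu}\|b\|_\star^2$. Everything else is a one-line convex-conjugate computation.
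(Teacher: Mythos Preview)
Your proof is correct and is exactly the standard argument for this result. Note, however, that the paper does not actually prove this lemma: it is imported verbatim as \cite[Lemma~7]{mcmahan-survey17} and used as a black box in the proof of Theorem~\ref{th:regret1}, so there is no ``paper's own proof'' to compare against. Your derivation---promoting $b\in\partial\psi(x_1)$ to $b\in\partial\phi_2(x_1)$ via optimality of $x_1$ for $\phi_1$ and the subdifferential sum rule, then applying $1$-strong convexity and maximizing the resulting quadratic in $\|x'-x_1\|$---is precisely the argument one finds in the cited reference.
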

Now we are ready to prove Theorem 1:
\begin{proof}[Proof of Theorem 1]
We start by applying Lemma \ref{lemm:OSFTRL2} to the result in \eqref{eq:OSFTRL}. Namely, we select:
\begin{align*}
&\phi_1(x)=v_{0:t}(x) + {c_t^\top}{x_t} -{\tilde c_t^\top}{x_t},\,\, \text{and}\\
&\phi_2(x)=\phi_1(x) - {c_t^\top}{x_t} +{\tilde c_t^\top}{x_t}. 
\end{align*}
This way, we have that $x_t\! =\! \arg\min{\phi_1(x_t)}$,\  $\phi_2(x)\! =\! v_{0:t}(x)$,\  $\psi(x)\! =\! (-c_t+\tilde c_t)^\top x$, \ and $(-c_t+\tilde c_t)\! \in\! \partial \psi(x)$. 

Then, dropping the non-positive terms $-r_t(\cdot)$ in \eqref{eq:OSFTRL}, setting $\tilde{c}_{T+1}=0$, and defining the norm
$\|\cdot\|_{(t)} = \sqrt{\sigma_{1:t}}\|x\| $ so that the regularizer $r_{1:t}(x)$ is 1-strongly-convex w.r.t. $\|\cdot\|_{(t)}$ we get:
\begin{align}
R_T\leq r_{1:T}(x^\star)+\frac{1}{2}\sum_{t=1}^T\|c_t-\p c_t \|_{(t),\star}^2,\,\,\,\,\forall x^\star \in \cal X. \label{mohri-rt}
\end{align}
Now, we have that $r_t\leq \frac{\sigma_t}{2}D_{\mathcal X}^2$, where $D_{\mathcal X}$ is the Euclidean diameter $D_{\mathcal X}$, i.e., $\forall x, x_t\! \in\! \cal X$:
\begin{align*}
\|x-x_t\|^2&=\sum_{n,j}(y_{nj}-y_{nj}^t)^2+\sum_{n,i,j}(z_{nij}-z_{nij}^t)^2 
\\
&\stackrel{(a)}{\leq} \sum_{n,j}|y_{nj}-y_{nj}^t|+\sum_{n,i,j}|z_{nij}-z_{nij}^t| \notag
\\
&\stackrel{(b)}{\leq}\notag 2(JC+1)\triangleq D_{\mathcal X}^2 \notag
\end{align*}
where $(a)$ holds as $y_{nj}, z_{nij}\!\in\![0,1], \forall n,i,j$; $(b)$ holds by the triangle inequality and definitions of $\mathcal Y$, $C\!\triangleq\! \max_{j} C_j$, and the fact that the routing variables differ at one coordinate only. To see why, recall that we serve one request per time slot and we set the routing variable $z_t$ after observing that request $q_t$. Hence, we can modify the routing variables and set $z^\star_{nij} =  z^t_{nij} = 0\  \forall n \neq n', i \neq i', j \neq j'$, where $q_{n'i'}=1 \wedge \ell_{i'j'=1}$. Due to the structure of $f_t(\cdot)$, the utility of these modified $z^\star_{nij}$ and $z^t_{nij}$ will not change (compared to their utility before modification). In words, knowing the requested file $n'$, and the location from which it is requested $i'$, there will be no utility from routing a non-requested file $n\neq n'$, or routing from a non-connected cache $j\neq j'$. Thus, we can zero these variables and get a smaller value for $D_{\mathcal{X}}$, without affecting the utility values.

Using this diameter bound, and the fact that the dual norm of $\|x\|_{(t)}$ is $\|x\|_{(t),\star} = \|x\|/\sqrt {\sigma_{1:t}}$, inequality \eqref{mohri-rt} can be written as:  
\begin{align}
R_T &\leq \frac{\sigma_{1:T}}{2} D_{\mathcal X}^2
+ \frac{1}{2} \sum_{t=1}^T \frac{h_t}{\sigma_{1:t}} \label{eq:reg_sigm}.
\end{align}
Note that the sum $\sigma_{1:t}$ telescopes and evaluates to $\sigma\sqrt{h_{1:t}}$. Using this observation and substituting it in \eqref{eq:reg_sigm}, and combining it with \cite[Lem. 3.5]{auer-2002} to bound the second term as follows $\sum_t^T h_t/\sqrt{h_{1:t}}\!\leq\! 2\sqrt{h_{1:T}}$, we eventually get:
\begin{align}
R_T&\leq \frac{\sigma}{2}  \sqrt{h_{1:T}} D_{\mathcal X}^2
+  \frac{1}{\sigma} \sqrt{h_{1:T}}\stackrel{(a)}\leq \sqrt{2} D_{\mathcal{X}} \sqrt{ h_{1:T}}, \notag
\end{align}
where $(a)$ is obtained by setting $\sigma=\sqrt{2}/D_{\mathcal X}$. Finally, substituting the actual diameter value, namely $D_{\mathcal X}\!\! =\!\! \sqrt{2(JC+1)}$, completes the proof.
\end{proof}

\textbf{Discussion}. Theorem \eqref{th:regret1} shows that the regret does not depend on the library size $N$ and is also modulated by the quality of the predictions; accurate predictions tighten the bound, and in the case of perfect predictions, i.e., when users follow the recommendations, we get negative regret $R_T \! \leq \! 0, \forall T$, which is much stronger than the sub-linear growth rates in other works \cite{paschos-infocom19,8999784}. In fact, even when predictions fail for constant number of time slots $L \in [T]$, the regret will be constant of the same order $R_T \leq O(L)$, which is still a significant improvement over any time-dependent bound.

On the other hand, for worst-case prediction, we can still use the bound $\|c_t-\p c_t \|^2 \leq 2w^2$, and get:
\[
R_T\leq 2\sqrt{2}w\sqrt{JC+1}\sqrt{T}=O(\sqrt{T})
\]
i.e., the regret is at most a constant factor worse than the regret of those policies that do not incorporate predictions\footnote{The factor is $\sqrt{2}$ compared to the ``any-time" version of the bound that do not use predictions, and $2$ compared to those that assume a known $T$.}. Thus, OBC offers an efficient and safe approach for incorporating predictions in cases where we are uncertain about their accuracy, e.g., either due to the quality of the rec-sys or the behavior of users.

Another key point is that the \emph{utility parameters might vary with time} as well. Indeed, replacing $w_t=(w_{nij}^t\!\leq \! w, n\!\!\in\!\!\mathcal N, i\!\!\in\!\!\mathcal I, j\!\!\in\!\!\mathcal J)$ in $f_t(x_t)$ does not affect the analysis nor the bound. This is important when the caching system employs a wireless network where the link capacities vary, or when the caching utility changes; and we note that this utility can be even \emph{file-specific}. Parameters $w_t$ can be also unknown when $x_t$ is decided, exactly as it is with $q_t$, and they can be predicted using e.g., channel measurements. Essentially the proposed model drops several restricted assumptions of prior works regarding, not only the knowledge of request rates/densities, but also about the system state, link quality, and user utilities. Finally, we observe that in case the algorithm is used to optimize the operation of an edge computing system, these parameters can capture the potentially time-varying utility of each computation, for each service ($n$) and each pair of user - cache. 

On a technical note, Lemma \ref{lemm:OSFTRL} presents a novel theoretical result and enables the improvement of the best known proximal OFTRL bound of \cite{mohri-aistats2016} by a constant factor of $\sqrt{2}$. This also opens the door for leveraging the modular analysis tools developed in \cite{mcmahan-survey17} in the optimistic OCO framework for different special cases of the utility functions. This technical result is of independent interest.

Lastly, in the case of more than one request per time slot (e.g., $B$ requests), the Euclidean norm would instead be $D^2_{\mathcal{X}} = 2(JC+B)$. Therefore, the same results hold (in terms of the regret being always sub-linear and commensurate with the prediction accuracy). However, the worst case bounds will of course be scaled by a factor of $\sqrt{B}$ since now we would use $\|c_t - \tilde c_t \|^2  \leq 2 B\ w^2$.

\section{Optimistic Caching in Elastic Networks}\label{sec:elastic}

We extend our analysis to \emph{elastic} caching networks where the caches can be resized dynamically. Such architectures are important for two reasons. Firstly, there is a growing number of small-size content providers that implement their services by leasing storage on demand from infrastructure providers \cite{amazon-cache}; and secondly, CDNs often resize their caches responding to the time-varying user needs and operating expenditures \cite{elastic-cdn}.

\begin{algorithm}[t]
	\nl \textbf{Input}: $\{\ell_{ij}\}_{(i,j)}$, $\{C_j\}_j$, $\mathcal{N}$, $\lambda_1\!=\!0$, $x_1\!\in\!\mathcal X_e$.\\%
	\nl \textbf{Output}: $x_t=(y_t, z_t)$, $\forall t$.\\%
	\nl \For{ $t=1,2,\ldots$  }{
		\nl Route  request $q_{t}$  according to configuration $x_t$\\
		\nl Observe system utility $f_t(x_t)$ and cost $g_t(x_t)$ \\
		\nl Update the budget parameter $\lambda_{t+1}$ using \eqref{dual-update}\\	
		\nl Update the regularizers $r_{0:t}(x)$ using \eqref{regula}-\eqref{regulb}, and $a_t\!=\!at^{-\beta}$\\
		\nl Observe prediction $\p c_{t+1}$ and price $s_{t+1}$ \\		
		\nl Calculate the new  policy $x_{t+1}$ using \eqref{primal-update} \\
	}
	\caption{{Optimistic Elastic Caching ($\pi_{oec}$)}}\label{alg2}
\end{algorithm}

We introduce the $t$-slot price vector $s_t\!=\!(s_j^t\!\leq \!s, j\!\in\!\mathcal J)$, where $s_j^t$ is the leasing price per unit of storage at cache $j$ in slot $t$, and $s$ its maximum value. In the general case, these prices may change arbitrarily over time, e.g., because the provider has a dynamic pricing scheme or the electricity cost changes \cite{giannakis-elasticJSAC19, jungho-wiopt}; hence the caching system has access only to $s_t$ at each slot $t$. We denote with $B_T$ the budget the system intends to spend during a period of $T$ slots for leasing cache capacity. The objective is to maximize the caching gains while satisfying:
\begin{align}
\sum_{t=1}^T g_t(x_t)=\sum_{t=1}^T \sum_{j\in\mathcal J}\sum_{n\in\mathcal N} s_j^ty_{nj}^t-B_T\leq 0.
\end{align}
In particular, the new benchmark problem in this case is:
\begin{align}
\mathbb P_2:\quad \max_{x\in\mathcal X} \sum_{t=1}^T f_t(x), \,\,\,\,\,\,\text{s.t.} \sum_{t=1}^T g_t(x)\leq 0,
\end{align}
which differs  from $\mathbb P_1$ due to the  leasing constraint. 

Indeed, in this case the regret is defined as:
\begin{align}\label{eq:regret-elastic}
	R_T^{(e)}(\pi)= \sup_{ \{f_t\}_{t=1}^T }\left[ \sum_{t=1}^T f_t\big(x^\star\big)-\sum_{t=1}^T f_t\big(x_t\big)\right],
\end{align}
where 
\[
x^\star\! \in\! \mathcal X_e\triangleq\{ x \in \mathcal X \mid \eqref{eq:opt1b}, g_t(x)\leq 0, \forall t \},
\]
i.e., $x^\star$ is a feasible point of $\mathbb P_2$ with the additional restriction to satisfy $g_t(x)\!\leq\!0$ in every slot. In the definition of $\mathcal{X}$, $C$ now denotes the maximum leasable space. Learning problems with time-varying constraints are hard, see impossibility result in \cite{tsitsiklis}, and hence require such additional restrictions on the selected benchmarks. We refer the reader to \cite{giannakis-TSP17} for a related discussion, and to \cite{paschos-icml, victor} for different benchmarks. Finally, apart from $R_T^{(e)}$, we need also to ensure sublinear growth rate for the budget violation:

\begin{equation}
V_T^{(e)}=\sum_{t=1}^T \left[ g_t(x_t) \right]_+. \notag
\end{equation}
To tackle this new problem we follow a saddle point analysis, which is new in the context of OFTRL. 

Namely, we first define a Lagrangian-type function by relaxing the budget constraint and introducing the dual variable $\lambda\geq 0$:

\begin{align}
\mathcal L_t(x,\lambda)\!=\!\frac{\sigma_t}{2}\|x\!-x_t\|^2 \!- f_t(x) \! +\lambda g_t(x)\!- \frac{\lambda^2}{a_t}. \label{eq:lag}
\end{align}
The last term is a non-proximal regularizer for the dual variable; and we use $a_t\!=\!at^{-\beta}$, where parameter $\beta\!\in\! [0,1)$ can be used to prioritize either $R_T^{(e)}$ or $V_T^{(e)}$. The main ingredients of policy $\pi_{oec}$ are the saddle-point iterations:
\begin{align}
\lambda_{t+1}=\arg\max_{\lambda\geq 0}\left\{-\frac{ \lambda^2}{a_{t+1}} + \lambda\sum_{i=1}^tg_i(x_i) \right\},\label{dual-update}	
\end{align}
\begin{align}
\!\!x_{t+1}\!=\!\arg\min_{x\in\mathbb R^m}\Bigg\{\!r_{0:t}(x)\!+\!\big( \sum_{i=1}^{t+1}\!\lambda_is_i -c_{1:t}\!-\p c_{t+1} \big)^\top\! x \Bigg\},
\label{primal-update} 
\end{align}
and its implementation is outlined in Algorithm \ref{alg2}. Note that we use the same regularizer as in Sec. \ref{sec:bipartite} for the primal variables $x_t$, while $\lambda_t$ modulates the caching decisions by serving as a \emph{shadow price} for the average budget expenditure. 

The performance of Algorithm OEC is characterized in the next theorem.
 
\begin{mdframed}
\begin{theorem} \label{th:regret-e}
Algorithm 2 ensures the bounds:
\begin{align}
& R_T^{(e)} \leq \sqrt{2}D_{\mathcal X}\sqrt{ \!\sum_{t=1}^T \|c_t\!-\p c_t\|^2 } \!+\frac{aM}{2}T^{1-\beta}	 \notag, \\
& V_T^{(e)} \leq \sqrt{\!\frac{2\sqrt{2} D_{\mathcal X}T^\beta }{a}\sqrt{\!  \sum_{t=1}^T \|c_t\!-\p c_t\|^2  } +MT \!-\! \frac{2R_T^{(e)}T^\beta}{a}} \notag,
\end{align}		
\end{theorem}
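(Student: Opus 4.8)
The plan is to split the argument into a primal part, which I reduce to the optimistic‑FTRL bound already proved in Theorem~\ref{th:regret1}, and a dual part, which I handle by a Lyapunov/drift argument tailored to the time‑varying regularizer $\lambda^2/a_t$ in \eqref{eq:lag}.

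\textbf{Primal part.} The key observation is that iteration \eqref{primal-update} is exactly the OFTRL step \eqref{proxy-step} run on the surrogate (still linear) utilities $\hat f_t(x)=f_t(x)-\lambda_t g_t(x)$, whose gradient is $\hat c_t=c_t-\lambda_t s_t$ with running prediction $\hat{\tilde c}_{t+1}=\tilde c_{t+1}-\lambda_{t+1}s_{t+1}$: both $\lambda_{t+1}$ (from \eqref{dual-update}) and $s_{t+1}$ are available at the end of slot $t$ per Algorithm~\ref{alg2}, so the surrogate prediction error $\|\hat c_t-\hat{\tilde c}_t\|=\|c_t-\tilde c_t\|$ coincides with the quantity driving the regularizer schedule \eqref{regulb}. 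Since $\hat f_t$ is revealed only after $x_t$ is fixed — by the update order in Algorithm~\ref{alg2}, $\lambda_t$, and hence $\hat f_t$, is committed before $x_t$ — and the benchmark satisfies $x^\star\in\mathcal X_e\subseteq\mathcal X$ so the diameter bound $D_{\mathcal X}$ still applies, the proof of Theorem~\ref{th:regret1} carries over verbatim (the boundary term $\hat{\tilde c}_{T+1}^\top(x_{T+1}-x^\star)$ in Lemma~\ref{lemm:OSFTRL} is zeroed by setting $\lambda_{T+1}=0$), giving $\sum_t\hat f_t(x^\star)-\sum_t\hat f_t(x_t)\le\sqrt2 D_{\mathcal X}\sqrt{\sum_t\|c_t-\tilde c_t\|^2}$. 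Expanding $\hat f_t$ and using $g_t(x^\star)\le0$, $\lambda_t\ge0$ to discard $\sum_t\lambda_t g_t(x^\star)$ yields
\[
R_T^{(e)}=\sum_t\big(f_t(x^\star)-f_t(x_t)\big)\le\sqrt2 D_{\mathcal X}\sqrt{\textstyle\sum_t\|c_t-\tilde c_t\|^2}-\sum_t\lambda_t g_t(x_t),
\]
so everything reduces to lower‑bounding $\sum_t\lambda_t g_t(x_t)$.

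\textbf{Dual part and assembling.} Writing $G_t=\sum_{i\le t}g_i(x_i)$, the closed form of \eqref{dual-update} is $\lambda_{t+1}=\tfrac{a_{t+1}}{2}[G_t]_+$. Using first‑order optimality of $\lambda_t$ — splitting into $\lambda_t>0$ (so $G_{t-1}>0$ and $\lambda_t=\tfrac{a_t}{2}G_{t-1}$) and $\lambda_t=0$ (so $G_{t-1}\le0$ and $[G_t]_+\le|g_t(x_t)|$) — together with the identity $2G_{t-1}g_t(x_t)=G_t^2-G_{t-1}^2-g_t(x_t)^2$, one obtains the per‑slot bound $\lambda_t g_t(x_t)\ge\tfrac{a_t}{4}\big([G_t]_+^2-[G_{t-1}]_+^2\big)-\tfrac{a_t}{4}g_t(x_t)^2$. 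Summing, using that $a_t$ is non‑increasing so Abel summation gives $\sum_t a_t\big([G_t]_+^2-[G_{t-1}]_+^2\big)\ge a_T[G_T]_+^2$, and bounding $g_t(x_t)^2\le M$ and $\sum_t a_t=a\sum_t t^{-\beta}=O(T^{1-\beta})$, yields $\sum_t\lambda_t g_t(x_t)\ge\tfrac{a_T}{4}[G_T]_+^2-\tfrac{aM}{4}\cdot O(T^{1-\beta})$. Plugging this into the regret inequality and discarding the non‑positive $-\tfrac{a_T}{4}[G_T]_+^2$ gives, after the routine constant bookkeeping, $R_T^{(e)}\le\sqrt2 D_{\mathcal X}\sqrt{\sum_t\|c_t-\tilde c_t\|^2}+\tfrac{aM}{2}T^{1-\beta}$. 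For the budget violation I instead keep the $[G_T]_+^2$ term: rearranging the regret inequality gives $\sum_t\lambda_t g_t(x_t)\le\sqrt2 D_{\mathcal X}\sqrt{\sum_t\|c_t-\tilde c_t\|^2}-R_T^{(e)}$, and combining with the dual lower bound and $a_T=aT^{-\beta}$ gives $[G_T]_+^2\le\tfrac{4T^\beta}{a}\big(\sqrt2 D_{\mathcal X}\sqrt{\sum_t\|c_t-\tilde c_t\|^2}-R_T^{(e)}\big)+MT$; taking square roots delivers the stated bound on $V_T^{(e)}$.

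\textbf{Main obstacle.} The delicate step is the dual analysis: deriving the per‑slot drift inequality with the correct $a_t$‑weights and then summing it against the non‑increasing sequence $a_t$ (the Abel‑summation step is where monotonicity of $a_t$ is essential, and where the $\beta$‑dependent factor $O(T^{1-\beta})$ originates). Everything else is either a direct invocation of Theorem~\ref{th:regret1} on the surrogate sequence $\{\hat f_t\}$ or bookkeeping with the prices $s_t$ and the constant $M$; the one structural point worth double‑checking is that the surrogate reduction is legitimate, i.e.\ that $\lambda_t$ (hence $\hat f_t$) is fixed before $x_t$ is played, which holds by the update ordering of Algorithm~\ref{alg2}.
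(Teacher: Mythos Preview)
Your primal reduction is identical to the paper's: both apply Theorem~\ref{th:regret1} to the surrogate linear utilities $\hat f_t(x)=f_t(x)-\lambda_t g_t(x)$, note that the surrogate prediction error collapses to $\|c_t-\tilde c_t\|$ because $\lambda_t,s_t$ are known when $x_t$ is decided, and then drop $\sum_t\lambda_t g_t(x^\star)\le 0$.

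The dual analysis, however, follows a genuinely different route. The paper treats the dual update \eqref{dual-update} as a (non‑proximal) FTRL iterate on the one‑dimensional losses $-\lambda\,g_t(x_t)$ with regularizer $\lambda^2/a_t$, and simply invokes the standard FTRL regret bound \cite[Theorem~1]{mcmahan-survey17} to obtain, \emph{for every} $\lambda\ge 0$,
\[
-\sum_t\lambda_t g_t(x_t)+\lambda\sum_t g_t(x_t)\le \frac{\lambda^2}{a_T}+\tfrac12\sum_t a_t g_t^2(x_t).
\]
Combining this ``for all $\lambda$'' inequality with the primal bound yields \eqref{eqRT1}; setting $\lambda=0$ gives the $R_T^{(e)}$ bound, and then optimizing over $\lambda$ via the Fenchel identity $\sup_{\lambda\ge 0}\{\lambda G_T-\lambda^2/a_T\}=\tfrac{a_T}{4}[G_T]_+^2$ gives the $V_T^{(e)}$ bound. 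Your approach instead works directly with the closed form $\lambda_{t+1}=\tfrac{a_{t+1}}{2}[G_t]_+$, derives the per‑slot drift $\lambda_t g_t(x_t)\ge\tfrac{a_t}{4}\big([G_t]_+^2-[G_{t-1}]_+^2\big)-\tfrac{a_t}{4}g_t^2(x_t)$, and telescopes via Abel summation using the monotonicity of $a_t$. Both arguments are correct and produce the same order bounds; the paper's is shorter because it outsources the dual estimate to a known lemma, while yours is self‑contained and makes explicit where the monotonicity of $a_t$ and the $T^{1-\beta}$ factor enter. Note that your drift route yields $\tfrac{a_T}{4}[G_T]_+^2$ (hence a factor $4T^\beta/a$ inside the square root for $V_T^{(e)}$) rather than the $2T^\beta/a$ printed in the statement; this matches what one actually gets from the paper's own \eqref{eqRT1} after taking $\sup_\lambda$, so the discrepancy is in the stated constant, not in your argument.
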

\end{mdframed}
where we have used the grouped constants $M=\frac{(sJC)^2}{1-\beta}$.
\begin{proof}
Observe that the update in \eqref{primal-update} is similar to \eqref{proxy-step} but applied to the Lagrangian in \eqref{eq:lag} instead of just the utility, and the known prices when $x_{t+1}$ is decided represent perfect prediction for $g_t(x)$. Using Theorem \ref{th:regret1} with $c_t \! - \! \ll_ts_t$ instead of $c_t$, and  $\p c_t \! - \!\ll_ts_t$ instead of $\p c_t$, we can write:
\begin{align}
\sum_{t=1}^T\! \Big(f_t(x^\star) - f_t(x_t) +\lambda_tg_t(x_t)-\lambda_tg_t(x^\star)\Big)\!\leq\! \sqrt{2}D_{\mathcal X} \sqrt{ h_{1:T}}, \notag
\end{align}
and rearrange to obtain:
\begin{align}
	R_T^{(e)}\leq \sqrt{2}D_{\mathcal X} \sqrt{ h_{1:T}} + \sum_{t=1}^T \ll_t g_t(x^\star) - \sum_{t=1}^T \ll_t g_t(x_t). \label{r1}
\end{align}
For the dual update \eqref{dual-update}, we can use the non-proximal-FTRL bound \cite[Theorem 1]{mcmahan-survey17} to write:
\begin{align}
\!\!\!\!-\!\sum_{t=1}^T \ll_t g_t(x_t) \!+\! \ll \sum_{t=1}^T g_t(x_t) \!\leq\! \frac{\ll^2}{a_T} \!+\! \frac{1}{2}\sum_{t=1}^T a_tg_t^2(x_t) \label{l1}.
\end{align}
Since $g_t(x^\star)\!\leq\! 0, \forall t$ and combining \eqref{r1}, \eqref{l1} we get:
\begin{align}
\!\!\!\!R_T^{(e)}\!\leq\!\sqrt{2}D_{\mathcal X} \sqrt{ h_{1:T} } \!- \ll\! \sum_{t=1}^T g_t(x_t) \!+\! \frac{\ll^2}{a_T} \!+\! \frac{1}{2}\sum_{t=1}^T \!a_tg_t^2(x_t) \label{eqRT1}
\end{align}
Setting $\ll\!=\!0$, using  the identity:
\[
\sum_{t=1}^Tat^{-\beta}\leq \frac{aT^{1-\beta}}{1\!-\!\beta}
\] 
and the bound $g_t(x_t)\!\leq\! sJC$, we arrive at the $R_T^{(e)}$ bound. 

For the violations, we use the following property in  \eqref{eqRT1}:
\begin{align}
	\frac{a_T}{2} \left[ \sum_{t=1}^T g_t(x_t) \right]^2_+ = \sup_{\lambda\geq 0} \left[\sum_{t=1}^T g_t(x_t)\lambda - \frac{\lambda^2}{2a_T} \right], \notag
\end{align}
Rearranging, we get:
\begin{align}
	&\frac{a_T}{2}(V_T^{(e)})^2 \leq \sqrt{2} D_{\mathcal X} \sqrt{ h_{1:T} } +\frac{a(sJC)^2}{2-2\beta}T^{1-\beta} - R_T^{(e)}. \notag
\end{align}
Finally, taking the square root yields the  $V_T^{(e)}$ bound.
\end{proof}

\textbf{Discussion}. The worst-case bounds in Theorem \ref{th:regret-e} arise when the predictions are failing. In that case, we have $\|c_t-\p c_t\|^2\leq 2w^2$ and use the bound $-R_T^{(e)}=O(T)$ for the last term of $V_T^{(e)}$, to obtain $R_T^{(e)}=O(T^\kappa)$, with $\kappa=\max\{1/2, 1-\beta\}$ while $V_T^{(e)}=O(T^\phi)$, with $\phi=\frac{1+\beta}{2}$. Hence, for $\beta=1/2$ we achieve the desired sublinear rates $R_T^{(e)}=O(\sqrt{T}), V_T^{(e)}=O(T^{3/4})$. However, when the rec-sys manages to predict accurately the user preferences, the performance of $\pi_{oec}$ improves substantially as the first terms in each bound are eliminated. Thus, for bounded $T$, we practically halve the regret and violation bounds.  

It is also interesting to observe the tension between  $V_T^{(e)}$ and $R_T^{(e)}$, which is evident from the $V_T^{(e)}$ bound and the condition $-R_T^{(e)}= O(T)$. The latter refers to the upper bound of the \emph{negative} regret, thus when it is consistently satisfied (i.e., for all $T$), we obtain an even better result: $\pi_{oec}$ \emph{outperforms} the benchmark. Another likely case is when $-R_T^{(e)}=O(\sqrt{T})$, i.e., the policy does not outperform the benchmark at a rate larger than $\sqrt{T}$. Then, Theorem \ref{th:regret-e} yields $R_T^{(e)}=O(T^\kappa)$ with $\kappa=\max\{1/2, 1-\beta\}$ while $V_T^{(e)}=O(T^\phi)$ with $\phi=\max\{1/2, 1/4+\beta/2\}$. Hence, for $\beta=1/2$ the rates are reduced to $R_T^{(e)}=O(\sqrt{T}), V_T^{(e)}=O(\sqrt{T})$.

Finally, it is worth observing that $\pi_{oec}$ can be readily extended to handle additional budget constraints such as time-average routing costs or average delays. And one can also generalize the approach to consider a budget-replenishment process where in each slot $t$ the budget increases by an amount of $b_t$ units. This is made possible due to the generality of the conditions (model perturbations can be non-stationary and correlated) under which the regret and violation bounds hold.

\section{Caching with multiple predictors} \label{sec:exps}

In this section, we consider the case where we have additional predictors, apart from the rec-sys, predicting the next-slot utility. Thus, we have a set of predictions $\{\tilde{c}_t^{(p)}, p\in\mathcal{P}\}$ at each slot $t$. To handle this setup and benefit from this abundance of predictions, we take a different approach and instead of using prediction-adaptive regularizers, as in the previous sections, we model the predictions as experts using the classical paradigm of learning through experts cf. \cite{hazan-book}. Based on this approach, we design a novel tailored optimistic \emph{meta-learning} policy to accrue the best possible caching gains.

In particular, we associate an expert to each predictor, and we will abuse notation denoting them both with $p\in \mathcal P$. We refer to these predictors-linked experts as the \emph{optimistic experts}. Every optimistic expert $p$ proposes its caching action $\{y^{(p)}_t\}_t$ at each slot $t$, by solving the following problem\footnote{To streamline the presentation, our analysis focuses on one cache, hence using only $y_t$ decisions. However, this method can be readily extended to caching networks as discussed later.} :
\begin{align}
    y^{(p)}_{t} = \argmax_{y \in \mathcal{Y}}\ \ {\tilde{c}_{t}^{(p)\top}} \ y. \label{eq:opt_action}
\end{align}
Note that \eqref{eq:opt_action} is indeed a certainty-equivalent\footnote{A certainty-equivalent program is one that considers a predicted utility vector as true and optimizes the decisions accordingly.} linear program. We denote with $R_{T}^{(p)}$ the regret of each expert w.r.t the optimal-in-hindsight caching configuration for the entire time period of $T$ slots, i.e.: 
\[
y^\star=\argmax_{y\in\mathcal{Y}} c_{1:T}^\top y.
\]

Besides the optimistic experts, we consider an expert that \emph{does not use predictions}. This special expert proposes an FTRL-based caching policy, and we refer to it as the \emph{pessimistic expert} and associate it with the special index $p=0$. The pessimistic expert proposes caching actions $\{y^{(0)}_t\}_t$ according to eq. \eqref{proxy-step}, but setting $\p c_t = 0$ for the regularization parameter $\sigma_t$ in \eqref{regulb}. Its regret is denoted with $R_{T}^{(0)}$. Our full experts set is the union of the group of optimistic experts with the pessimistic expert $\mathcal{P}^+=\{0\cup \mathcal{P}\}$. 

We aim to learn a caching policy whose regret is upper-bounded by the regret of the best expert and does not exceed the $O(\sqrt T)$ regret of the pessimistic expert. Such a methodology of modelling policies as experts has been studied in the past \cite{daron_finding}.
However, this is the first work that implements optimistic learning through an experts model. In addition, here we also make the next step and propose to include a prediction for \emph{the performance of each predictor}.

\begin{figure*}[t]
	\centering
	\includegraphics[width=0.99\textwidth]{./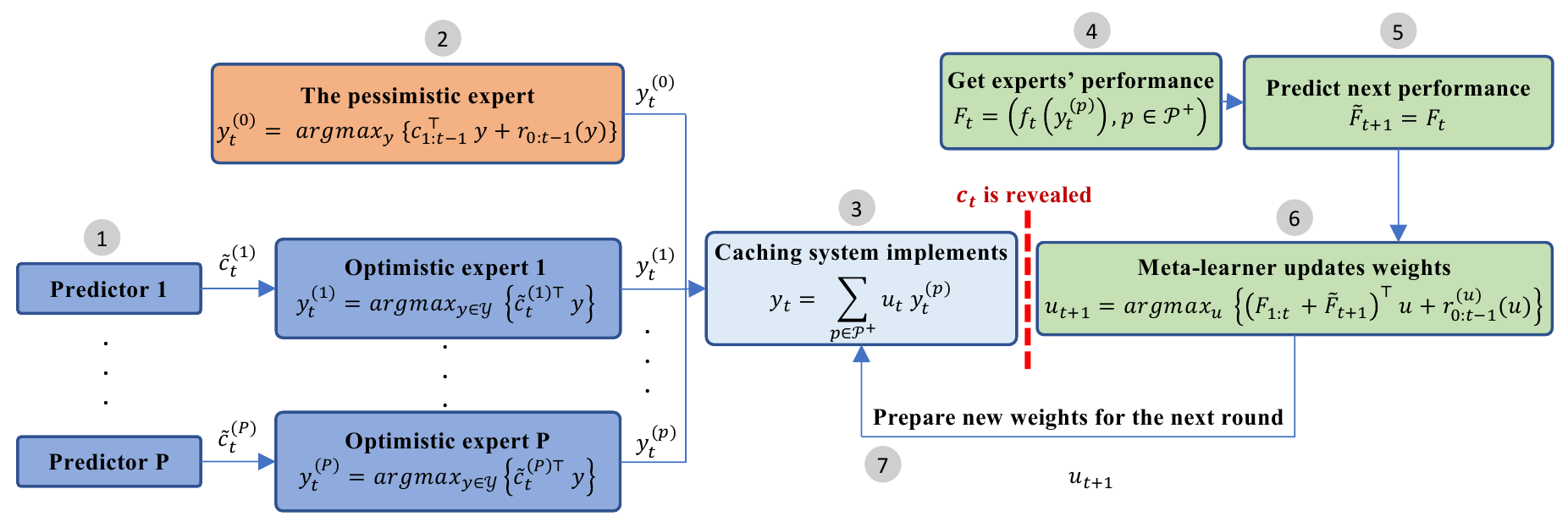}
	\caption{A decision step for the meta-learning policy $\pi_{xc}$. The policy is a combination of experts' proposals (predictors) according to their priority weights that were learned based on observations that are collected until slot $t$.}
	\label{fig:xc}
\end{figure*}

\begin{algorithm}[t]
	\nl \textbf{Input}: $C$; $y_1\!\in\!\mathcal Y$; $\sigma=\sqrt{2}/D_{\mathcal Y}$.\\%
	\nl \textbf{Output}: $y_t$, $\forall t$.\\%
	\nl \For{ $t=1,2,\ldots$  }{
		\nl Observe utility predictions $\{\tilde{c}_{t}^{(p)}, p\in\mathcal{P}\}$\\
		\nl Calculate optimistic proposals $\{\!y^{(p)}_{t}\!\}_{p\in\mathcal{P}}$ with \eqref{eq:opt_action}\\
		\nl Update $r_{0:t-1}(x)$ using \eqref{regula}-\eqref{regulb} with $\p c_{t}\! = \! 0$\\
		\nl Calculate pessimistic proposal $y^{(0)}_{t}$ with \eqref{proxy-step}\\
		\nl Serve request $q_{t}$ with meta-policy $y_t$ from \eqref{eq:comp_action}\\
		\nl Observe the utilities of experts' proposals $F_t = \left(f_t(y_t^{(p)}), p \in \mathcal{P}^+\right)$\\
		\nl Set experts' performance prediction $\tilde F_{t+1}\!=\!F_t$\\
		\nl Calculate the new weights $u_{t+1}$ using \eqref{weights_action}\\
	}
	\caption{{Experts Caching ($\pi_{xc}$)}}\label{alg3}
\end{algorithm} 

Namely, the caching decision is the convex combination of experts' proposals according to the \emph{weights} $u_t\! =\! (u^{(p)}_{t}, p \in \mathcal{P}^+)$ selected from the simplex: 
\[
\Delta_{\mathcal{P}^+} =\bigg\{u \in [0, 1]^{P+1} \bigg \vert \sum_{p \in \mathcal{P}^+} = 1\bigg\},
\]
namely:
\begin{align}
    y_{t} = \sum_{p \in \mathcal{P}^+} u^{(p)}_{t} \ y^{(p)}_{t} \label{eq:comp_action}.
\end{align}
Thus, $y_{t+1}$ remains a feasible caching vector, despite being produced by mixing all the experts' proposals. The mixing weights, $\{u_t\}_t$, are updated through a new OFTRL step, which is similar to the updates of $\pi_{obc}$ but we use the superscript $(u)$ for the involved parameters to make clear the distinction. In particular, the update is:
\begin{align}
    u_{t+1} = \argmin_{u \in \Delta_{\mathcal{P}^+}} \Big\{r_{0:t}^{(u)}(u) - (F_{1:t} + \tilde F_{t+1})^\top u   \Big\} \label{weights_action},
\end{align}
where $F_t=\big(f_t(y_t^{(p)}), p \in \mathcal{P}^+\big)$ is the $t$-slot performance vector for the experts. The regularizers and the respective parameters, in this case, are decided by the following formulas:
\begin{align}
\!\!r_0^{(u)}(x)=\bm I_{\Delta_{\mathcal{P}^+}}(u), \,\,\,\, r_t^{(u)}(x)=\frac{\sigma_t^{(u)}}{2}\|u-u_t\|^2,\,\,t\geq 1 \label{exp_regula},
\end{align}

\begin{align}
	&\sigma_1^{(u)} = \sigma^{(u)} \sqrt{h_1^{(u)}}, \sigma_t^{(u)}=\sigma^{(u)} \left( \sqrt{ h_{1:t}^{(u)} } - \sqrt{h_{1:t-1}^{(u)} }\right),\ t\geq2  \notag,
	\\ 
	&\text{with} \quad h_t^{(u)}=\|F_t-\p F_t\|^2. \notag 
\end{align}

For the  predictions of the \emph{experts' performance} $\tilde{F}_t$, at each time step, we will use expert's performance of the previous time step\footnote{We note that the motivation for a large corpus of optimistic learning works stems from the fact that in many cases the cost functions are changing slowly \cite{mohri-aistats2016, sridharan-nips2013}. While the motivation in this work has been  different until this section (availability of rec-sys), the optimism in the meta-learner has the same scope with those initial works.}:
\begin{align}
    \p F_t &=\left( \p f_t^{(p)}, \ p \in \mathcal{P}^+\right) \triangleq  \left( f_{t-1}(y_{t-1}^{(p)}),\ p \in \mathcal{P}^+\right). \label{eq:exp_pred}
\end{align}

This type of meta-optimism comes for free since, in practice, we expect the predictors to have consistent accuracy across contiguous  slots; either accurately due to recently trained model, or poorly due to e.g., distributional shift (see \cite{yang_recommender} and references therein) hence we can use the previous function. As for the pessimistic expert, its caching decisions do not vary much in consecutive slots (due to using strongly convex regularizers in updating the decisions).

The execution of the policy is summarized in Algorithm \ref{alg3}; and we also include the step-by-step visualization in Fig. \ref{fig:xc} so as to facilitate the reader. Namely, we see the sequence of steps where: (1) The Predictors output $\{\tilde{c}_t^{(p)}, p\in\mathcal{P}\}$; (2) The optimistic experts optimize for the predictions, whereas the pessimistic expert performs an FTRL step; (3) The experts' proposals are combined via the meta-learner weights; (4) The performance of experts' proposals is calculated through the revealed request; (5) The meta-learner optimistically sets the next-slot experts' performance to be the same as the current one; (6) \& (7) The meta-learner performs an OFTRL step to calculate and set the weights for the next-slot. And the process repeats for the next slot.

\begin{table*}[t]
	\centering
	\caption{Online caching policies with \emph{adversarial} guarantees: a summary of the contributions and comparison with literature. 
		\\
	}
	\label{tab:summary}
	\begin{tabularx}{0.99\textwidth}{|m{1.2cm}||m{6.8cm}|m{2.7cm}|m{2.6cm}|Y| }
		\hline
		\multirow{2}{*}{\textbf{Algorithm}} &   \multirow{2}{*}{ \textbf{Model and Conditions}} & \multicolumn{2}{c|}{\textbf{Guarantees ($R_T, V_T\leq$ )}} &   \multirow{2}{*}{\makecell{\textbf{Adaptive} \\\textbf{Learning}}}
		\\ \cline{3-4} 
		&  & \textbf{Best case} & \textbf{Worst case}  & \\ \hline
		1 ($\pi_{obc}$) &
		$\bullet$ Bipartite network $\bullet$ Coded files $\bullet$  Predictions & \qquad $0$
		& \qquad $O\left(\sqrt{T}\right)$& \checkmark
		\\ \hline 
		2 ($\pi_{ec}$) &  
		$\bullet$ Bipartite  $\bullet$ Coded files $\bullet$  Predictions  $\bullet$ Budget constr. 
		&$O\left(\kappa\sqrt{T}\right), O\left(\kappa T^{\frac{3}{4}}\right)$ & $O\left(\!\frac{\kappa}{2}\sqrt{T}\!\right),  O\left(\!\frac{\kappa}{2} T^{\frac{3}{4}} \!\right)$& \checkmark
		\\ \hline 
		3 ($\pi_{xc}$) &
		$\bullet$ Bipartite network $\bullet$ Coded files $\bullet$ $P
		\geq 1$ predictors
		&\!\! $\!\!A_P\!\triangleq\! \min_{p \in \mathcal{P}^+}\! \left\{\! R_{T}^{(p)}\!\right\}\!\!\!\! $ & $2w\sqrt{(P+1)T} \!+\! A_P$ & \checkmark
		\\ \hline
		\cite{stratis-2020} &
		$\bullet$ Single cache $\bullet$ Coded \& uncoded files
		&  \multicolumn{2}{c|}{$
			O\left(\sqrt{T}\right)$ } & --\\ \hline
		\cite{paschos-infocom19} &
		$\bullet$ Bipartite network $\bullet$ Coded files
		&  \multicolumn{2}{c|}{$
			O\left(\sqrt{T}\right)$ } & --
		\\ \hline 
		\cite{abhishek-sigm20} &
		$\bullet$ Bipartite network $\bullet$ Coded (single cache) \& uncoded
		&  \multicolumn{2}{c|}{$
			O\left(\sqrt{T}\right)$ } & --
		\\ \hline
		\cite{Li-online-2021} &
		$\bullet$ General graph network $\bullet$ Coded \& uncoded files
		&  \multicolumn{2}{c|}{$
			O\left(\sqrt{T}\right)$ } & --
		\\ \hline
		\cite{abhishek_nurips} &
		$\bullet$ Bipartite network
		$\bullet$ Coded \& uncoded files
		&  \multicolumn{2}{c|}{$
			O\left(\sqrt{T}\right)$ } & --
		\\ \hline
	\end{tabularx}
\end{table*}

The following theorem bounds the regret, defined as: 
\[
R^{(xc)}_T \!=  \sum_{t=1}^T \! {c_t}^\top  (y^\star - y_t),
\]
of the proposed meta-learning policy.
\begin{mdframed}
\begin{theorem}\label{th:regret3}
Algorithm 3 ensures the regret bound: 
\begin{align}
    R^{(xc)}_T \! &\leq 2 \sqrt{ \sum_{t=1}^T\|F_t- {F}_{t-1}\|^2} + \min_{p \in \mathcal{P}^+} \left\{ R_{T}^{(p)} \right\} \notag
    \\
    &\leq 2w\sqrt{(P+1)T} + \min_{p \in \mathcal{P}^+} \left\{ R_{T}^{(p)} \right\} \notag.
\end{align}
\end{theorem}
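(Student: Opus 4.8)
The plan is to decompose the meta-learner's regret via the usual experts-learning telescoping: write $R^{(xc)}_T = \sum_t c_t^\top(y^\star - y_t)$ and, for any fixed expert $p \in \mathcal{P}^+$, split it as
\begin{align}
R^{(xc)}_T = \underbrace{\sum_{t=1}^T c_t^\top\big(y_t^{(p)} - y_t\big)}_{\text{(meta-regret)}} + \underbrace{\sum_{t=1}^T c_t^\top\big(y^\star - y_t^{(p)}\big)}_{= R_T^{(p)}}. \notag
\end{align}
The second term is exactly the regret $R_T^{(p)}$ of expert $p$ against the hindsight-optimal $y^\star$, so it remains to bound the first term — the regret of the OFTRL meta-update \eqref{weights_action} run over the simplex $\Delta_{\mathcal{P}^+}$ with performance vectors $F_t$ and predictions $\tilde F_{t+1} = F_t$. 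Since $y_t = \sum_p u_t^{(p)} y_t^{(p)}$ and the performance vector has components $f_t(y_t^{(p)}) = c_t^\top y_t^{(p)}$ (linearity of $f_t$), we have $c_t^\top y_t = \langle u_t, F_t\rangle$ and $c_t^\top y_t^{(p)} = \langle e_p, F_t\rangle$; hence the meta-regret equals $\sum_t \langle e_p - u_t, F_t\rangle$, which is precisely the OFTRL regret on the simplex against the vertex $e_p$.

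Next I would invoke Theorem \ref{th:regret1} applied to this meta-level instance: the OFTRL update \eqref{weights_action} is the same algorithm as \eqref{proxy-step} but on the decision set $\Delta_{\mathcal{P}^+}$ with "gradients" $F_t$ and predictions $\tilde F_{t+1}$. Theorem \ref{th:regret1}'s bound is $2\sqrt{1 + (\text{capacity-type constant})}\cdot\sqrt{\sum_t \|F_t - \tilde F_t\|^2}$; here the role of $JC+1$ is played by the squared Euclidean diameter term of the simplex $\Delta_{\mathcal{P}^+}$, which satisfies $D^2_{\Delta}/2 = 1$ (the simplex in $\mathbb{R}^{P+1}$ has squared Euclidean diameter $2$, since two vertices are at distance $\sqrt 2$), so the prefactor becomes $2\sqrt{1} = 2$. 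With the optimistic choice $\tilde F_t = F_{t-1}$ this gives meta-regret $\le 2\sqrt{\sum_{t=1}^T \|F_t - F_{t-1}\|^2}$. Since $p \in \mathcal{P}^+$ was arbitrary, I take the minimum over $p$, yielding the first displayed inequality. For the second inequality I would use that each component of $F_t$ lies in $[0, wC]$ — wait, more carefully, $f_t(y_t^{(p)}) = c_t^\top y_t^{(p)}$ and with a single cache $c_t$ has one nonzero entry bounded by $w$ while $y_t^{(p)} \in [0,1]$, so each component of $F_t - F_{t-1}$ is bounded by $w$ in absolute value (the request hits one file), giving $\|F_t - F_{t-1}\|^2 \le (P+1)w^2$ and hence $2\sqrt{\sum_t \|F_t - F_{t-1}\|^2} \le 2w\sqrt{(P+1)T}$.

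The main obstacle is making rigorous the claim that Theorem \ref{th:regret1} applies verbatim at the meta-level — in particular checking that the proximal regularizers \eqref{exp_regula}–\eqref{eq:exp_pred} satisfy exactly the hypotheses used in the proof of Theorem \ref{th:regret1} (strong convexity, the telescoping $\sigma_{1:t}^{(u)} = \sigma^{(u)}\sqrt{h_{1:t}^{(u)}}$, and the boundary term $\tilde F_{T+1}$ being harmless), and that the diameter constant indeed collapses to $D^2_{\Delta_{\mathcal{P}^+}} = 2$ rather than scaling with $P$. A secondary subtlety is the telescoping argument for $\sum_t \|F_t - F_{t-1}\|^2$ with the convention $F_0$ (equivalently $\tilde F_1$) — one should set $\tilde F_1 = F_0 = 0$ or argue the $t=1$ term is absorbed into a harmless constant, mirroring the $\tilde c_{T+1} = 0$ discussion after Lemma \ref{lemm:OSFTRL}. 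Everything else is bookkeeping: the linearity $f_t(y_t^{(p)}) = c_t^\top y_t^{(p)}$ that lets performance vectors play the role of gradients, and the fact that mixing experts' feasible proposals keeps $y_t \in \mathcal{Y}$ so that $R^{(xc)}_T$ is well-defined.
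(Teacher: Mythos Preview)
Your proposal is correct and follows essentially the same route as the paper: decompose $R_T^{(xc)}$ into the meta-learner's regret on the simplex plus the best expert's regret, then invoke Theorem~\ref{th:regret1} at the meta-level with $D_{\Delta_{\mathcal{P}^+}}^2=2$ to get the prefactor $2$, and finally bound each coordinate of $F_t-F_{t-1}$ by $w$. The only cosmetic difference is that the paper fixes $u^\star=e^k$ for the best expert $k$ up front rather than bounding against an arbitrary $e_p$ and then minimizing, but since the Theorem~\ref{th:regret1} bound holds for any comparator in the feasible set these are equivalent.
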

\end{mdframed}

\begin{proof}
We first relate the regret of the combined caching decisions to that of the experts. Then, we can re-use the result of Theorem \ref{th:regret1}:
\begin{align} \!
    R^{(xc)}_T \! &=  \! \sum_{t=1}^T \!\left( {c_t}^\top{y^\star} \! - \! {c_t}^\top\sum_{p \in \mathcal{P}^+} u^{(p)}_{t} \ y^{(p)}_{t}\right) = \sum_{t=1}^ T {c_t}\!^\top\!{y^\star}\! - \! {F_t}\!^\top\!{u_t} \notag 
    \\
    &= \sum_{t=1}^ T {c_t}^\top{y^\star} -  {F_t}^\top{u^\star} + {F_t}^\top{u^\star} - {F_t}^\top{u_t}  \notag 
    \\
    &=R_T^{(u)} + \min_{p \in \mathcal{P}^+} \left\{ R_{T}^{(p)} \right\} \label{eq:comp_reg}
\end{align}
where $R_T^{(u)}$ is the regret for the \emph{weights $u$}: $R_T^{(u)} = \sum_{t=1}^T  {F_t}^\top{u^\star} - {F_t}^\top{u_t} $. Note that \eqref{eq:comp_reg} holds because $u^\star \!\! =\! \argmax_u\ {F_{1:t}}\!\!^\top \! {u} \! = {e^k}$,
$k = \argmax_p f_t(y_t^{(p)})$
and $e^k$ is standard basis vector. Thus, we have:
\begin{eqnarray}
F_{1:t}^\top\ u^\star = \max_{p \in \mathcal{P}^+}\bigg\{ \sum_{t=1}^T f_t (y^{(p)}_t) \bigg\}.
\end{eqnarray}
We use the result of Theorem $1$ to bound $R_T^{(u)}$:
\begin{eqnarray}
    R_T^{(u)} \leq \sqrt{2}\Delta_{\mathcal{P}^+} \sqrt{ \sum_{t=1}^T\|F_t- \tilde{F}_t\|^2} \leq 2 w\sqrt{(P+1)T},
\end{eqnarray}
where the last inequality follows from the simplex diameter ($\sqrt{2}$) and the fact that:
\[
\|F_t - \tilde{F}_t\|^2 \leq \sum_{p \in \mathcal{P}^+}  \|(f_t(y_t^{(p)}) - f_{t-1}(y_{t-1}^{(p)})) \|^2 \leq | \mathcal{P}^+ | w^2
\]
i.e., we predicted a miss or a hit, whichever happened at $t-1$, but the opposite happens at $t$. Substituting in \eqref{eq:comp_reg} gives the bound.
\end{proof}

\textbf{Discussion}. A key observation in Theorem \ref{th:regret3} is that its bound contains the regret of the best optimistic expert. This yields very improved bounds for $R_T^{(xc)}$ whenever there is an optimistic expert that achieves negative regret. For example, if an expert can achieve $R_T\! =\! \Theta(-T)$, e.g., because it has a very accurate rec-sys (its recommendations are most-often followed), then the overall regret is $R_T^{(xc)}\!=\! O(\sqrt{T}) \!- \Theta(-T)$, which becomes strictly negative for large $T$. Moreover, the $O(\sqrt{T})$ term shrinks with more consistent performance of the experts, a condition that is rather expected in practical systems, thus getting us even faster to the regret of the best expert. Nonetheless, since the pessimistic expert exists in the group of experts, the $\min$ term is upper bounded by  $O(\sqrt{T})$ and $R_T^{(xc)}$ will maintain $O(\sqrt{T})$ regardless of the performance of the optimistic experts.

Since Algorithm XC can work with a single optimistic expert, which is the setup handled by $\pi_{obc}$, it is interesting to compare their bounds. In fact, neither of those algorithms is better in all possible scenarios (regarding request patterns; evolution of utility parameters, etc. see also Sec. \ref{sec:evaluation}) than the other; and their relative performance ranking (in terms of utility) depends on the specific problem instance. For example, under worst-case predictions, we have that:\footnote{The pessimistic expert's regret is bounded by $2w\sqrt{CT}$ for the single cache setup (i.e., the diameter $D_\mathcal{Y} = \sqrt{2C}$).} 
\begin{align}
R_T^{(xc)} &\leq 2w \sqrt{(P+1)T} + 2w\sqrt{CT} \notag \\
&\leq 2(\sqrt{2}+\sqrt{C})w\sqrt{T} \qquad\qquad (P=1)
\label{eq:xc_reg_single},
\end{align}
which can be actually better than policy $\pi_{obc}$'s worst-case prediction bound\footnote{Note that $\pi_{obc}$'s worst-case bound is $2w\sqrt{2CT}$ for the single cache setup discussed here.} for practical values of the constants $C$. 

On the other hand, in cases where inaccurate predictions occur for a certain fraction of the steps $\lceil \alpha T \rceil, 0<\alpha<1$ the $\min$ term in the $R_T^{(xc)}$ bound might evaluate to the pessimistic expert's regret since the optimistic experts can suffer linear regrets\footnote{This happens, e.g., when the pessimistic expert achieves a hit on the steps $[\alpha T]$.} $R^{(p)}_T \leq O(\alpha T)$. For $\pi_{obc}$, the regret will be of the form 
\begin{align}
R_T &\leq 2\sqrt{C} \sqrt{\sum_{t\in \left[ \lceil \alpha T \rceil \right]}\|\tilde c_t - c_t\|} \notag \\
&\leq 2\sqrt{2C}w\sqrt{\lceil \alpha T \rceil} \label{eq:obc_reg_alpha}.
\end{align}
For example, for $\alpha\leq 1/2$, The upper bound in \eqref{eq:obc_reg_alpha} is tighter than that in \eqref{eq:xc_reg_single} for all ${C, w}$. Hence, $\pi_{obc}$'s regret can be smaller if in the described case. Overall, the choice between $\pi_{xc}$ and $\pi_{obc}$ in the case of a single predictor depends on the request sequence and the number of steps where predictions fail. Section \ref{sec:evaluation} demonstrates these cases using various scenarios.

\begin{figure*}
	\centering
	\includegraphics[width=0.95\textwidth]{./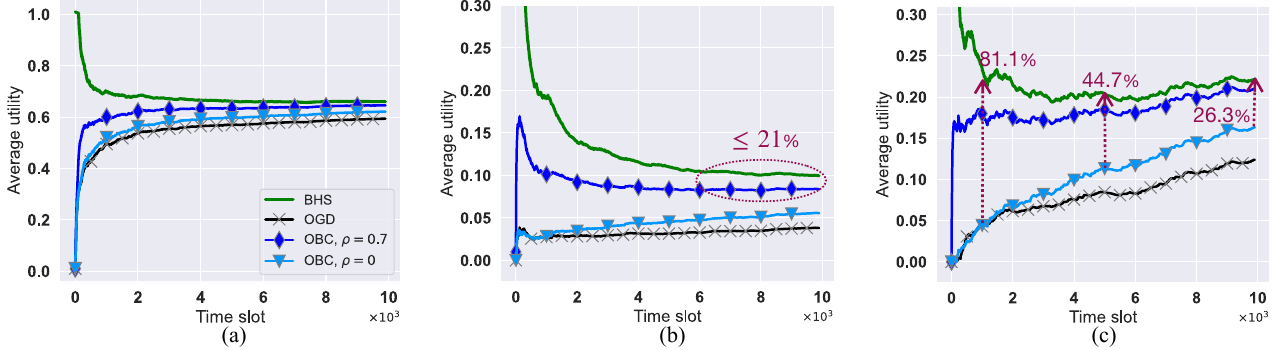}
	\caption{Utility in the single cache model with one rec-sys of different recommendation quality levels (i.e., $\rho$) in (a) Zipf requests with $\zeta=1.1$, (b) YouTube request traces, (c) MovieLens request traces.}
	\label{fig:single_cache_nonexp}
\end{figure*}

\begin{figure*}
	\centering
	\includegraphics[width=0.93\textwidth]{./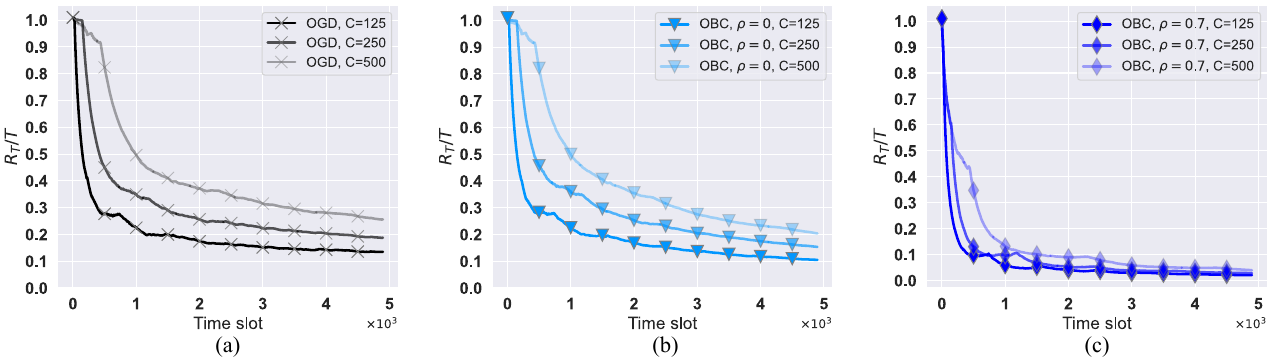}
	\caption{Regret over time in the single cache model with different         values of the cache capacity for (a) $\pi_{ogd}$, (b) $\pi_{obc}$ with $\rho = 0$, (c) $\pi_{obc}$ with $\rho = 0.7$.}
	\label{fig:plot-C}
\end{figure*}

\begin{figure*}
	\centering
	\includegraphics[width=0.99\textwidth]{./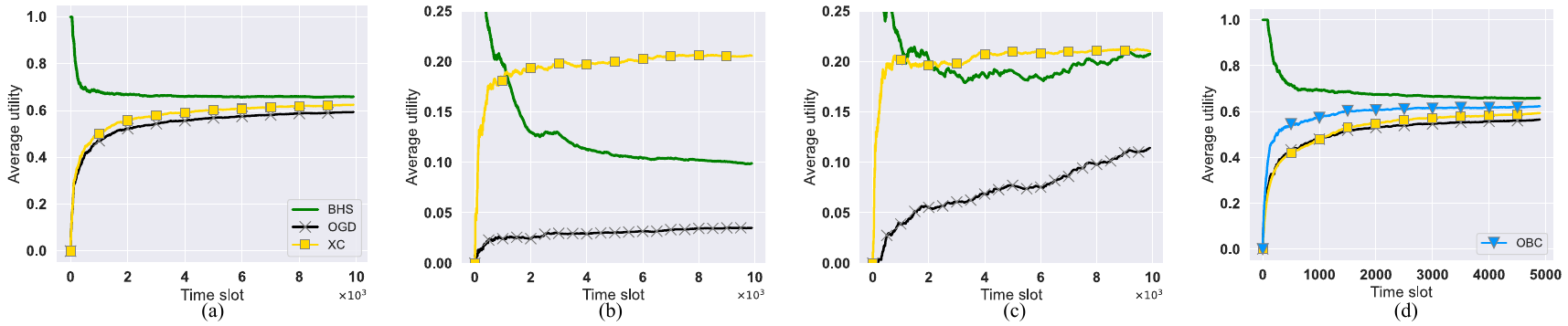}
	\caption{Utility in the single cache model with two rec-sys of recommendation qualities $\rho = 2\%$ and $\rho = 20\%$, each modeled as an expert within XC, in (a) Zipf requests with $\zeta=1.1$, (b) YouTube request traces, (c) MovieLens request traces. (d) A comparison between $\pi_{obc}$ and $\pi_{xc}$ using one rec-sys of an alternating recommendation quality.}
	\label{fig:single_cache}
\end{figure*}

Regarding the extension to caching networks (more than one cache), note that sets $\mathcal{Y}$ and $\mathcal{Z}$ are convex by definition. Also, the set defined by the connectivity constraints \eqref{eq:opt1b} is convex (linear constraint in the variable $x$). Recalling that the Cartesian product and the intersection of convex sets is convex, we conclude that the constraint set $\mathcal{X}$ defined in \eqref{eq:const_set}, from which the joint caching-routing variable $z$ is selected, remains convex. Finally, as demonstrated earlier, the meta-learner takes the convex combination of the experts proposals, and since each expert proposes a caching-routing configuration $z_t^{(p)} \in \mathcal{X}$, the meta caching-routing policy:
\begin{align}
    z_{t+1} = \sum_{p \in \mathcal{P}^+} u^{(p)}_{t+1} \ z^{(p)}_{t+1}, \quad \quad u^{(p)}_{t+1} \in \Delta_{\mathcal{P}^+} \label{eq:comp_action_net}
\end{align}
remains a valid one (i.e., $z_t \in \mathcal{X}, \forall t$). Therefore, indeed, the ideas proposed in this section can be readily applied to bipartite caching networks.

Finally, since we have now presented our algorithms, let us summarize in Table \ref{tab:summary} their main features and revisit how they compare with the state-of-the-art results. For Alg. \ref{alg1} - \ref{alg2}, the best case refers to the scenario where the request predictions are perfect $\tilde{c}_t = c_t, \forall t$; and the worst case to the scenario where predictions are furthest from the truth $\tilde{c}_t = \arg\max_{c}\|c - c_t\|, \forall t$. The dependence of the constant factors of the regret bound, denoted with $\kappa$ to facilitate presentation, was made explicit for Alg. \ref{alg2} where we saw that these constants shrink with the predictions' accuracy; and the same holds for Alg. \ref{alg1}. For Alg. \ref{alg3}, the best case refers to the scenario where the experts' predictions are perfect $\tilde{F}_t = F_t, \forall t$; while the worst case arises when $\tilde{F}_t = \arg\max_{F}\|F - F_t\|, \forall t$. Algorithms that do not leverage predictions in regret analysis (all prior work in caching\footnote{We note the exception of \cite{Lykouris-ML} which considers \emph{ML advice} in the paging problem (single cache, uncoded), but its bounds are defined w.r.t. the cache size and quantified in terms of competitive ratio -- a metric that is not comparable to regret, see discussion in \cite{pmlr-v30-Andrew13}.}) have the best and worst case columns merged. We also distinguish between adaptive and static learning rates. While some prior works do employ time-adaptive learning (dynamic steps), as e.g., in \cite{paschos-infocom19}, none of them adapts the rates (or, equivalently the regularization) to the observed gradients$\{c_t\}_t$ and/or prediction errors, as we propose here, but instead use the Lipschitz constant $w$, where $\|c_t\|\leq w, \forall t$. This leads to looser bounds in most practical cases \cite{mcmahan-survey17} and, of course, does not allow to benefit from the availability of predictions.

\section{Performance evaluation}\label{sec:evaluation}

\begin{figure*}[t]
\centering
	\includegraphics[width=0.91235\textwidth]{./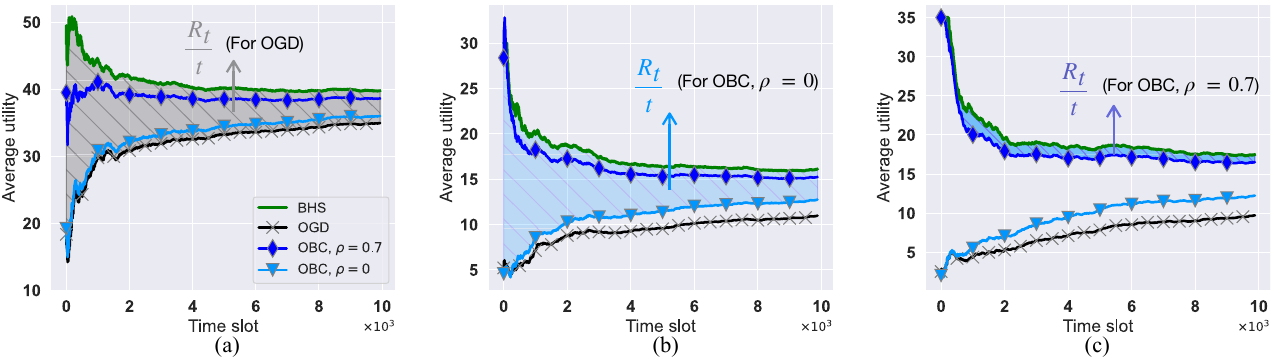}
	\caption{Attained utility in the bipartite model under different recommendation quality levels in (a) Zipf requests with $\zeta=1.1$, (b) YouTube request traces, (c) MovieLens request trace.}
	\label{fig:bi_unconst}
\end{figure*}

We evaluate $\pi_{obc}$, $\pi_{oec}$ and $\pi_{xc}$ under different request patterns and predictions modes; and we benchmark them against $x^\star$ and the OGD policy \cite{paschos-infocom19} that outperforms other state-of-the-art policies \cite{giovanidis-mLRU, leonardi-implicit}. We observe that when reasonable predictions are available, the proposed policies have an advantage, and under noisy predictions, they still reduce the regret at the same rate with OGD, as proven in the Theorems. First, we compare $\pi_{obc}$ and $\pi_{xc}$ against OGD \cite{paschos-infocom19} in the single cache case. We then study $\pi_{obc}$ for the bipartite model and $\pi_{oec}$ with the presence of budget constraints. We consider three requests scenarios, stationary Zipf requests (with parameter $\zeta=1.1$) and two actual request traces: YouTube (YT) \cite{zink2008watch} and MovieLens (ML) \cite{mlds}. For predictions, we assume that at each time step, the user follows the recommendation with probability $\rho$ (unknown to the caching system), and we experiment with different $\rho$ values. The full codebase for the proposed policies and experiments is available via GitHub \cite{code}.

\begin{figure*}[t]
	\centering
	\includegraphics[width=0.939\textwidth]{./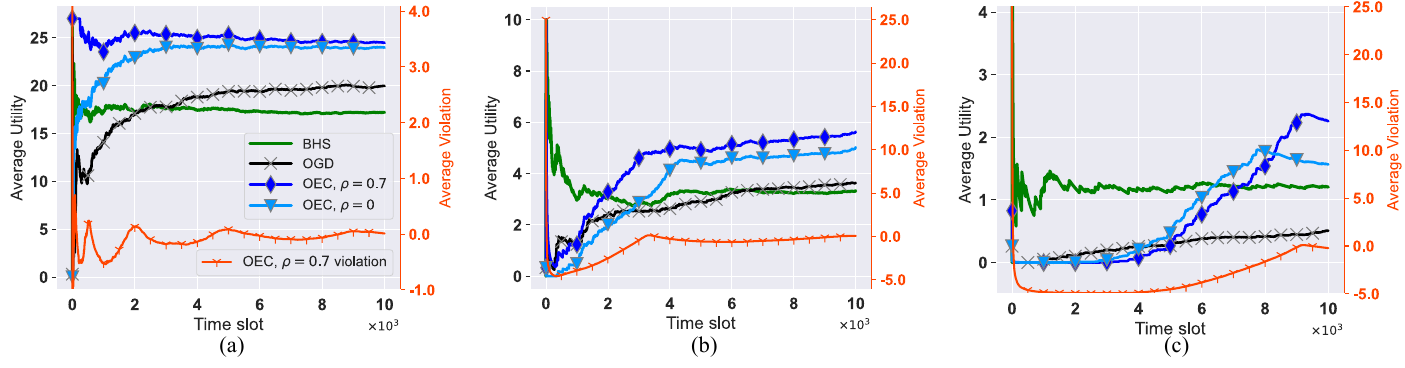}
	\caption{Utility and budget utilization with (a): Zipf requests with $\zeta=1.1$ and (b): YouTube traces (c): MovieLens traces.}
	\label{fig:bi_const_st}
\end{figure*}

\textbf{Single Cache Scenarios}.
We set $w=1$ to study the cache hit rate scenario, use a library size of $N=10^4$ files, and cache capacity of $C=100$ files. Figures \ref{fig:single_cache_nonexp}.a-c. depict the attained average utility, $\frac{1}{t}\sum_{i=1}^t f_i(x_i)$, for each policy and the Best in Hindsight (BHS) cache configuration \emph{until that slot}, i.e., we find the best in hindsight\footnote{Unlike \cite{paschos-infocom19} that calculates $x^\star$ for the largest $t$, we use $x_t^\star$ for each $t$. Thus, the gap between any policy and BHS at $t$ is the policy's average regret $R_t/t$.} for each $t$. Note that BHS always achieves utility $1$ initially (first requests to fill the cache). Thus, we cut the y-axis for better presentation in Figures \ref{fig:single_cache_nonexp}.b,c. It can be seen that the accurate predictions (i.e., when users follow the recommendations $70\%$ of the time) pushes the performance of our online caching towards BHS. For example, in \ref{fig:single_cache_nonexp}.b, the utility gap is at most $21\%$ after $t=6k$. At the same time, even when users do not follow the recommendation at all, we still maintain a diminishing regret; the gap in \ref{fig:single_cache_nonexp}.c goes from $81.1\%$ at $t=1k$, to $26.3\%$ at $t=10k$. In Fig. \ref{fig:plot-C}, we study the effect of increasing the cache size. Expectedly, the regret increases since the diameter of the decision set also increases. However, with higher prediction quality, this effect is minimized since the regret is proportional to a shrinking error term.

For multiple predictors, we use $\pi_{xc}$. 
In figures \ref{fig:single_cache}.a-c, we evaluate $\pi_{xc}$ with 3 experts: an FTRL expert, and two other optimsitic experts. The first optimistic expert is endowed with a predictor (e.g., a recommendation system) that gets followed with probability $\rho=2\%$. For the other it is  $\rho=20\%$. As shown in the plots, $\pi_{xc}$ achieves negative regret on the traces (it outperforms the BHS policy) and converges to the performance of the best expert ($0.20$ utility). This is because in more spread distributions and real request traces, predicting the next request provides great advantage for policies that modify the cache online over the fixed BHS. In the stationary Zipf request pattern, the optimal cache is for the files with top probabilities, which are easily captured by BHS. Thus, BHS policy performs the best. In Fig. \ref{fig:single_cache}.d we show the advantages of $\pi_{obc}$ compared to $\pi_{xc}$ \emph{with two experts}: an FTRL expert and a recommendation-based expert. $\rho$ alternates between $100\%$ and $0\%$ (i.e., requesting the file recommended at a time step $t$, and any other file at $t+1$, and so on). Here, $\pi_{obc}$ outperforms $\pi_{xc}$ since the alternating prediction accuracy induces frequent switching between the two experts in $\pi_{xc}$: the performance of the optimistic expert alternate between $0$ and $1$, while that of pessimistic expert is in the range ($0.55, 0.65$). Hence, $\pi_{xc}$ is inclined to place some weight on the prediction expert at one step, only to retract and suffer a greater loss at the following one had it stayed with the full weight on the FTRL expert. Due to the additional regret caused by such frequent switching, $\pi_{obc}$'s regret is $54.8\%$ of $\pi_{xc}$'s. $\pi_{obc}$ also achieves $38.2\%$ of $\pi_{ogd}$'s regret. 



\textbf{Bipartite Networks}. We consider next a bipartite graph with $3$ caches and $4$ user locations, where the first two locations are connected with caches $1$ and $2$, and the rest are connected to caches $2$ and $3$. The utility vector is $w_n=(1, 2, 100), \forall i,j$, thus an efficient policy places popular files on cache $3$. This is the setup used in \cite{paschos-infocom19} that we adopt here to make a fair comparison. For the zipf scenario, we consider a library of $N=1000$ files and $C=100$. For the traces scenario, files with at least $10$ requests are considered, forming a library of $N=456$ files for the YouTube dataset, and we set $C=50$, and $N=1152$ for the ML dataset, and we increase $C=100$. The location of each request is selected uniformly at random. Similar to the single-cache case, we plot the average utility of the online policies and the best static configuration \emph{until each $t$}. Recall that the area between a policy and BHS is the average regret of that policy. To avoid clutter, we shade this area for OGD in the first sub-figure, as an example, and for OBC in the next two.

In Fig. \ref{fig:bi_unconst}.a, the effect of good predictions is evident as OBC maintains utility within $5.5\%$ of BHS's utility after $t=2.5k$. Even when the recommendations are not followed, OBC preserves the sublinear regret, achieving a gap of $30.4\%$ and $8.5\%$ for  $t=1k$ and $t=10k$, respectively. Akin patterns appear in the traces scenarios. Namely the similarity between OBC with good predictions and BHS, and the improvement in OBC utility despite the recommendations quality. We also note lower utility scores across all policies due to the more spread request.  

Next, we consider the case of budget constraint and evaluate $\pi_{oec}$ for the zipf case, Fig. \ref{fig:bi_const_st}.a,  and the two traces: Fig. \ref{fig:bi_const_st}.b, c. The prices at each slot are generated uniformly at random in the normalized range $[0, 1]$, and the available budget is generated randomly $b_t = \mathcal{N}(0.5, 0.05)  \times  10$ i.e., enough for approximately $10$ files. Such tight budgets magnify the role of dual variables and allow us to test the constraint satisfaction. The benchmark $x^\star$ is computed once for the \emph{full time horizon}, and its utility is plotted for each $t$. In both scenarios, we note that the constraint violation for all policies is approximately similar, fluctuating during the first few slots and then stabilizing at zero. Hence, we plot it for one case.

\begin{figure*}
	\centering
	\includegraphics[width=0.85\textwidth]{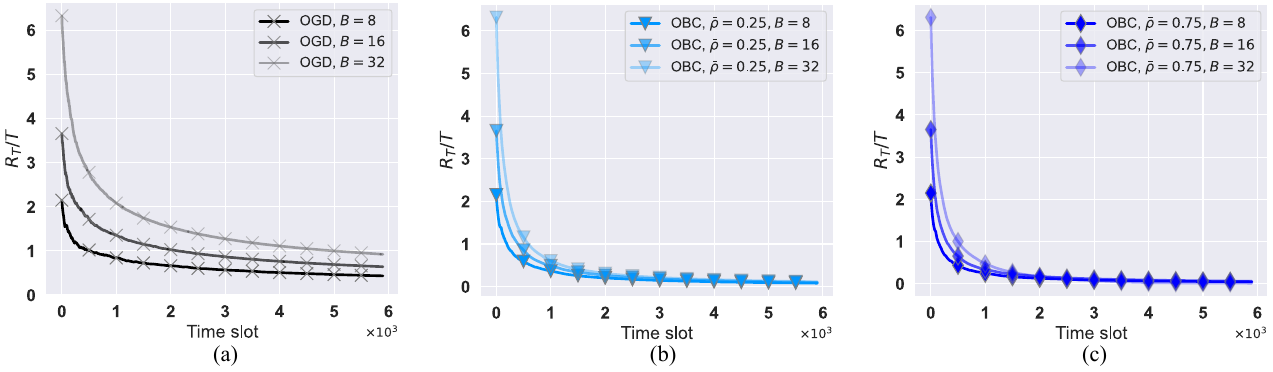}
	\caption{Average regret over time for the single cache, generalized batched requests model for (a) $\pi_{ogd}$, (b) $\pi_{obc}$ with $\bar\rho =         0.25$, and (c) $\pi_{obc}$ with $\bar\rho = 0.75$.}
	\label{fig:batched}
\end{figure*}

Concluding, we find that $\pi_{oec}$ can even outperform the benchmark. This is because the actual request patterns in the traces are not actually adversarial. Also, unlike the benchmark policy, $\pi_{oec}$ is allowed to violate the budget at some time slots, provided that the constraints are eventually satisfied, which occurs either due to strict satisfaction or due to having an ample subsidy at some slots. For example, in the first scenario (Fig. \ref{fig:bi_const_st}.a), the good predictions enable OEC to outperform $x^\star$ by $42.2\%$ after observing all requests ($T=5K$). OGD, and OEC with noisy predictions attain utility units improvement of $16.1\%$, $39.3\%$, respectively, over the BHS. We note in Fig.\ref{fig:bi_const_st}.b, c the delay in learning due to initially over-satisfied budget constrain. This is a transient effect as the dual variables approach their optimal value. Eventually, the bounds on regret are satisfied. We stress that the algorithms scale for very large libraries $\mathcal N$ the only bottleneck in the simulations is finding $x^\star$, which involves the horizon $T$; this is not required in real systems.

\textbf{Computational Complexity.} Note that the OFTRL update (e.g., \eqref{proxy-step}) is either a linear program in case $r_t(\cdot) = 0, \forall t$, or a quadratic program otherwise. In the former case, the solution is finding the top $C$ most requested files. This can be done through a simple ordering operation whose worst-case complexity is linear in the dimension of the decision variable (e.g., $O(N\log(N)$ for each cache). If, however, we have regularization terms, then the resulting quadratic program can be solved in closed form in $\mathbb{R}$. Then, for the projection to the feasible set $\mathcal{X}$, we can use specialized projection algorithms for the capped simplex (the capacity constraints for each cache) whose worst-case complexity is still polynomial in the dimension (i.e., $O(N^2)$)\cite{proj_cs}. 

The above discussion is a worst-case one. In practice, we use the fact that we repeatedly solve similar optimization problems at each time step (the update step differs by adding one linear term and one quadratic term). Thus, the solution in a step can be used as an initial point for the following one. We note that the experimental running times are significantly less than the worst-case ones, as can be seen in Fig. \ref{fig:TC-1} and Fig. \ref{fig:TC-2}. In these figures, these simulations were done using \texttt{CVXPY} $1.2$ package with \texttt{Python} $3.10$ running on an Apple M1 Pro Chip and 16GB of RAM. The differences between OGD and the optimistic one is due to the difference in the update problem structure (lazy vs. greedy projection).

\begin{figure}
	\centering
	\includegraphics[width=0.45\textwidth]{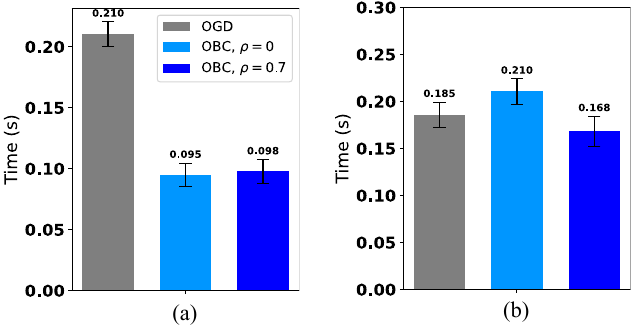}
	\caption{Average time consumed per decision step in the bipartite            network configuration and the ML Dataset ($N=1152$, $C=100$, $|\mathcal{I}|=4$, $|\mathcal{J}|=3$) with (a) pre-reserved storage ($\pi_{obc}$), and      (b) elastic storage ($\pi_{oec}$).}
	\label{fig:TC-1}
\end{figure}

\begin{figure}
	\centering
\includegraphics[width=0.45\textwidth]{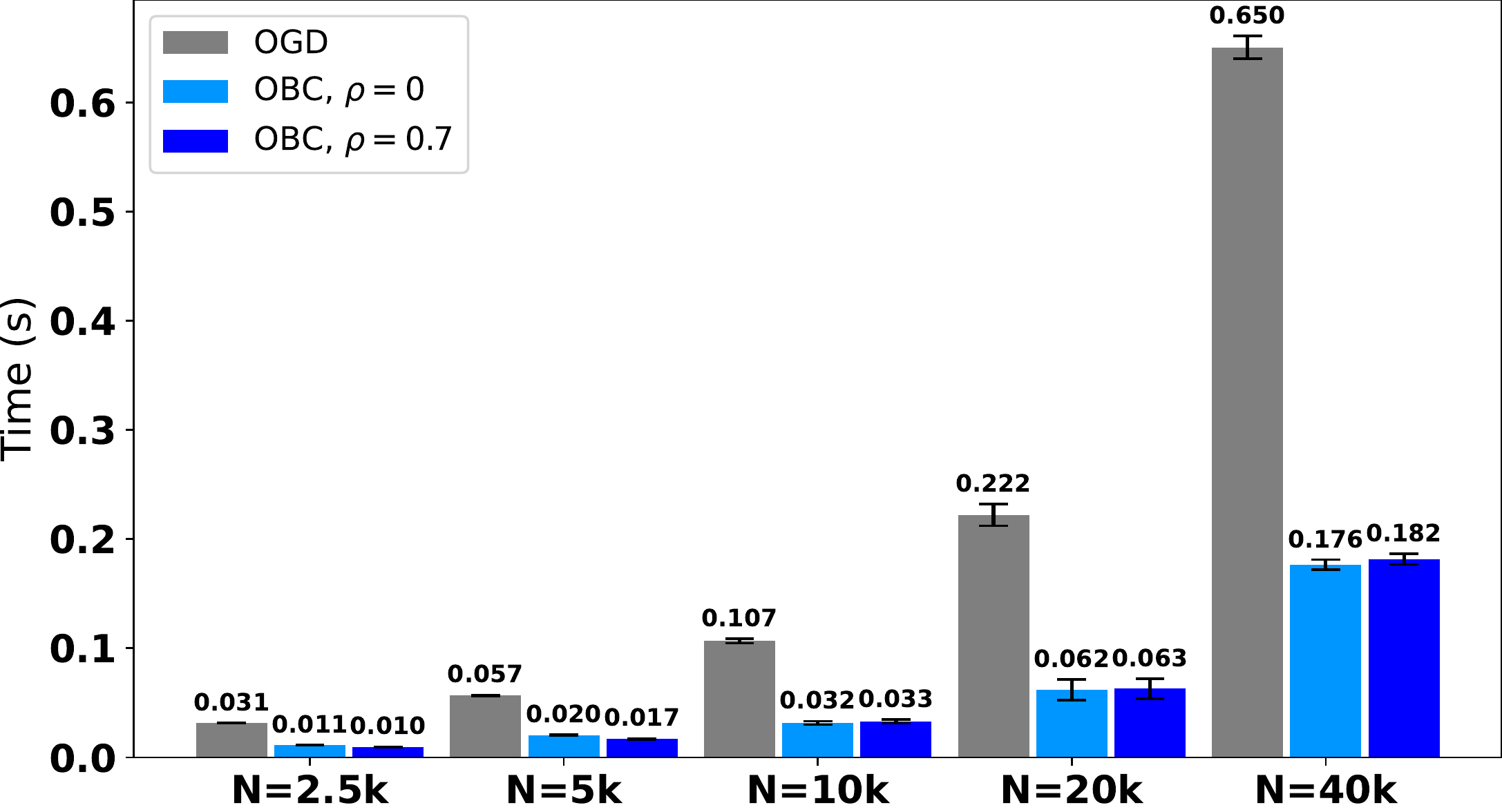}
	\caption{Average time consumed per decision step in the single cache configuration and stationary requests with different library sizes $N$. The sample size is $T=10k$ }
	\label{fig:TC-2}
\end{figure}

\textbf{Batched Requests.} As mentioned in the discussion of the system model, as well as that of Theorem 1, our assumption on the request model (one request per time slot) is for technical ease of analysis. The main feature of the proposed policies, which is having $R_T \propto O({\sqrt{\sumT\|c_t - \p c_t\|}})$, remains valid when the request model is batched (i.e., processing $B$ requests per time slot). However, the upper bound (not necessarily the regret) will be scaled accordingly since the diameter of the decision set will now increase. Namely, following the same steps in the proof of Theorem. 1 (after \eqref{mohri-rt}), instead of $D_{\mathcal{X}} = 2(JC+1)$, we would have $D_{\mathcal{X}} = 2(JC+B)$. Fig. \ref{fig:batched} shows experimentally how the batch size affects the regret. In this experiment, we introduce a new parameter, $\bar\rho$, which denotes the percentage of requests correctly predicted out of the total $B$. We note in these experiments that with better predictions, the effect is amortized since the term $\sqrt{\sumT\|c_t - \p c_t\|}$ remains small.

\section{Conclusions}\label{sec:conclusions}

The problem of online caching is timely with applications that extend beyond content delivery to edge computing and in fact to any dynamic placement problem with Knapsack-type constraints. This work proposes a new suite of caching policies that leverage predictions obtained from content recommendations, and possibly other forecasters, to minimize the caching regret w.r.t an ideal (yet unknown) benchmark cache configuration. As recommender systems permeate online content viewing platforms, such policies can play an essential role in optimizing caching efficacy. We identified and built upon this new connection between caching and recommender systems. The proposed algorithmic framework is scalable and robust to the quality of recommendations and the possible variations of network state and the request sequences, which can even be decided by an adversary. The achieved bounds improve upon the previously known caching regret performance, see \cite{paschos-infocom19,Li-online-2021, stratis-2020, abhishek-sigm20} and references therein. Finally, we believe this work opens new research directions both in terms of caching, e.g., pursuing the design of optimistic policies for uncoded caching; and in terms of resource scheduling in pertinent network and mobile computing problems using untrusted sources of optimism, i.e., predictors of unknown or varying accuracy.

\bibliography{references_short}
\bibliographystyle{IEEEtran}
\end{document}